\newtheorem{theorem}{Theorem}
\newtheorem{definition}{Definition}
\newtheorem{remark}{Remark}
\newtheorem{example}{Example}
\def\BState{\State\hskip-\ALG@thistlm}
\def\defopp{\mathrel{\ooalign{%
\raisebox{2\height}{\scriptsize $p$}\cr\hidewidth$\neq$\hidewidth\cr}}}
\DeclareMathOperator*{\argmin}{arg\,min}
\begin{document}

\title{Multi-Dimensional Spatially-Coupled Code Design: Enhancing the Cycle Properties}

\author{Homa~Esfahanizadeh,~\IEEEmembership{Student Member,~IEEE}, Lev Tauz,~\IEEEmembership{Student Member,~IEEE}, and~Lara~Dolecek,~\IEEEmembership{Senior~Member,~IEEE}\vspace{-1cm}
\date{}
\markboth{IEEE Transactions on Communications}{Submitted paper}
\thanks{H. Esfahanizadeh, L. Tauz, and L. Dolecek are with the Department of Electrical and Computer Engineering, University of California, Los Angeles, Los Angeles, CA 90095 USA (e-mail: {hesfahanizadeh}@ucla.edu; {levtauz}@ucla.edu; dolecek@ee.ucla.edu).\vspace{0cm}}
\thanks{Parts of the paper were presented at the 56th Annual Allerton Conference on Communication, Control, and Computing 2018 \cite{EsfahanizadehAllerton2018}, and the 10th Annual Non-Volatile Memories Workshop 2019 \cite{EsfahanizadehNVMW2019}.}
}

\maketitle

\begin{abstract}
A circulant-based spatially-coupled (SC) code is constructed by partitioning the circulants in the parity-check matrix of a block code into several components and piecing copies of these components in a diagonal structure. By connecting several SC codes, multi-dimensional SC (MD-SC) codes are constructed. In this paper, we present a systematic framework for constructing MD-SC codes with notably better cycle properties than their one-dimensional counterparts. In our framework, the multi-dimensional coupling is performed via an informed relocation of problematic circulants. This work is general in the terms of the number of constituent SC codes that are connected together, the number of neighboring SC codes that each constituent SC code is connected to, and the length of the cycles whose populations we aim to reduce. Finally, we present a decoding algorithm that utilizes the structures of the MD-SC code to achieve lower decoding latency.
Compared to the conventional SC codes, our MD-SC codes have a notably lower population of small cycles, and a dramatic BER improvement. The results of this work can be particularly beneficial in data storage systems, e.g., 2D magnetic recording and 3D Flash systems, as high-performance MD-SC codes are robust against various channel impairments and non-uniformity.
\vspace{0cm}
\end{abstract}

\IEEEpeerreviewmaketitle
\newpage
\section{Introduction\vspace{0cm}}

Spatially-coupled (SC) codes are a family of graph-based codes that have attracted significant attention thanks to their capacity approaching performance. {One-dimensional SC (1D-SC)} codes are constructed by coupling  a series of disjoint block codes into a single coupled chain \cite{FelstromIT1999}. Here, we use circulant-based (CB) LDPC codes \cite{TannerIT2004} as the underlying block codes. {
The 1D-SC codes have been well studied from the asymptotic perspective and the finite length perspective. From the asymptotic perspective, density evolution techniques have been used to study the decoding threshold, e.g., \cite{LentmaierIT2010,KudekarIT2013}. From the finite length perspective, via the evaluation and optimization of the number of problematic combinatorial objects, it has been shown how an informed coupling strategy can notably improve the performance, e.g., see \cite{EsfahanizadehTCOM2018,MitchellISIT2017,BeemerISITA2016}.}

Multi-dimensional SC (MD-SC) codes can be constructed by coupling several {blue}{1D-SC} codes together via rewiring the existing connections or by adding extra variable nodes (VNs) or check nodes (CNs){\cite{TruhachevITA2012,OhashiISIT2013}}. MD-SC codes are more robust against burst erasures and channel non-uniformity, and they have improved iterative decoding thresholds, compared to 1D-SC codes{\cite{TruhachevITA2012,OhashiISIT2013}}. MD-SC codes were introduced in \cite{TruhachevITA2012,OhashiISIT2013} and investigated more in \cite{TruhachevICC2012,OlmosITW2013,SchmalenISTC2014,LiuCOMML2015,TanakaWCNC2017,AliWCL2018}. 

In  \cite{TruhachevITA2012,TruhachevICC2012,OlmosITW2013}, constructions are presented for MD-SC codes that have specific structures, e.g., loops and triangles.
The construction method for MD-SC codes presented in \cite{OhashiISIT2013} involves connecting edges uniformly at random such that some criteria on the number of connections are satisfied.
In \cite{SchmalenISTC2014}, a framework is presented for constructing MD-SC codes by randomly and sparsely introducing additional CNs to connect VNs at the same positions of different chains. 
In \cite{LiuCOMML2015}, multiple SC codes are connected by random edge exchanges between adjacent chains to improve the iterative decoding threshold. In \cite{TanakaWCNC2017,AliWCL2018}, MD-SC codes are presented to improve the error correction performance against the severe burst errors in wireless channels.

Previous works on MD-SC codes, while promising, have some limitations. In particular, they either consider random constructions or are limited to specific topologies. As a result, they do not effectively use the added degree of freedom achieved by the multi-dimensional (MD) coupling in order to improve particular properties of the code, e.g., girth and minimum distance. They also use the density evolution technique for the performance analysis. This technique is dedicated to the asymptotic regime and is based on some assumptions, e.g., being cycle-free, that cannot be readily translated to the practical finite-length case. In \cite{OlmosITW2013}, a finite-length analysis in the waterfall region for MD-SC codes with a loop structure is presented. 

Finding the best connections to be rewired in order to connect constituent {blue}{1D-SC} codes and construct MD-SC codes with outstanding finite-length performance is still an open problem. 
This paper is the first work to present a comprehensive systematic framework for constructing MD-SC codes by coupling individual SC codes together to attain fewer short cycles. For connecting the constituent SC codes, we do not add extra VNs or CNs, and we only rewire some existing connections.
This paper is an extended version of our work published in \cite{EsfahanizadehAllerton2018}. We extend our previous work by: (1) connecting an arbitrary number of SC codes at a desired MD coupling depth to construct MD-SC codes; (2) converting the instances of the short cycles in the constituent SC codes to cycles of the largest possible length in the MD-SC code; and (3) presenting a low-latency decoder that exploits the structure of the constituent SC codes along with the structure of the final MD-SC code.

For exchanging the connections, we follow three rules: (1) The connections that are involved in the highest number of short cycles are targeted for rewiring; (2) The neighboring constituent SC codes to which the targeted connections are rewired are chosen such that the associated short cycles convert to cycles of the largest possible length in the MD setting; (3) The targeted connections are rewired to the same positions in the other constituent SC codes in order to preserve the low-latency decoding property. From an algebraic viewpoint, problematic circulants (which correspond to groups of connections) that contribute to the highest number of short cycles in the constituent SC codes are relocated to connect these codes together.

The rest of the paper is organized as follows. In Section II, the necessary preliminaries are briefly reviewed. In Section III, the structure of our MD-SC codes is presented. In Section IV, our novel framework for constructing MD-SC codes with enhanced cycle properties is introduced. In Section V, a low-latency algorithm for decoding MD-SC codes is presented. In Section VI, our simulation results are given. Finally, the conclusion appears in Section VII.\vspace{0cm}

\section{Preliminaries\vspace{0cm}}

Throughout this paper, each column (resp., row) in a parity-check matrix corresponds to a VN (resp., CN) in the equivalent graph of the matrix. Regular CB codes are $(\gamma,\kappa)$ LDPC codes, where $\gamma$ is the column weight of the parity-check matrix (VN degree), and $\kappa$ is the row weight (CN degree). The parity-check matrix $\mathbf{H}$ of a CB code is constructed as follows\vspace{0cm}:

\begin{equation}\label{CB_structure}
\mathbf{H}=\left[
\begin{array}{cccccccc}
\sigma^{f_{0,0}} & \sigma^{f_{0,1}}  & \dots & \sigma^{f_{0,\kappa-1}} \\
\sigma^{f_{1,0}} & \sigma^{f_{1,1}}  & \dots & \sigma^{f_{1,\kappa-1}} \\
\vdots & \vdots & \dots & \vdots\\
\sigma^{f_{\gamma-1,0}} & \sigma^{f_{\gamma-1,1}}  & \dots & \sigma^{f_{\gamma-1,\kappa-1}}
\end{array}
\right].
\end{equation}

Each circulant has the form $\sigma^{f_{i,j}}$ where $i$, $0\leq i\leq \gamma {-}1$, is the row group index, $j$, $0\leq j\leq \kappa {-} 1$, is the column group index, and $\sigma$ is the $z\times z$ identity matrix cyclically shifted one unit to the left. {blue}{The term $f_{i,j}$ specifies the power of the circulant at row group index $i$ and column group index $j$.} We use CB codes as the underlying block codes of SC codes.
{We highlight that, in this paper, each circulant in (\ref{CB_structure}) is a permutation of an identity matrix. Thus, each circulant has weight $1$. Circulants with larger weights have a negative impact on the girth \cite{WangIT2013}, and we do not use them in our code construction since the ultimate goal is to improve the cycle properties.
}

The parity-check matrix $\mathbf{H}_\textnormal{SC}$ of a CB SC code is constructed by partitioning the $\kappa\gamma$ circulants of the underlying block code into ($m+1$) component matrices $\mathbf{H}_0,\mathbf{H}_1,\dots,\mathbf{H}_m$ (with the same size as $\mathbf{H}$), and piecing $L$ copies of the component matrices together as shown in Fig.~\ref{H_SC_structure}. {blue}{The parameter $m$ is called the memory, and the parameter $L$ is called the coupling length.} Each component matrix $\mathbf{H}_l$, $0\leq l \leq m$, has a subset of circulants of $\mathbf{H}$ and zeros elsewhere so that $\sum_{l=0}^{m}\mathbf{H}_l=\mathbf{H}$. A replica $\mathbf{R}_\nu$, $1\leq \nu \leq L$, is a submatrix of $\mathbf{H}_\textnormal{SC}$ that has one submatrix $[\mathbf{H}_0^T\dots\mathbf{H}_m^T]^T$, Fig.~\ref{H_SC_structure}.

Recently, a systematic framework for partitioning the underlying block code and optimizing the circulant powers, known as the optimum partitioning and circulant power optimizer (OO-CPO) technique, was proposed for constructing high-performance SC codes \cite{EsfahanizadehTCOM2018,HareedyGC2018}. In this paper, we use the OO-CPO technique for designing the constituent SC codes that are then used to construct MD-SC codes. We note that choosing high-performance 1D-SC codes as constituent SC codes is not necessary in our MD-SC construction, and it only results in a better start point in a framework that further improves the performance via MD coupling.

Short cycles have a negative impact on the performance under iterative decoding. They affect the independence of the extrinsic information exchanged in the iterative decoder.
{Moreover, problematic combinatorial objects that cause the error-floor phenomenon, e.g., absorbing sets and trapping sets \cite{Richardson2003,DolecekIT2010}, are formed of cycles with relatively short lengths \cite{EsfahanizadehTCOM2018,HareedyGC2018,BanihashemiIT2014,BanihashemiIT2016}. Finally, short cycles can have a negative impact on the code minimum distance. In \cite{SmarandacheIT2012,BattaglioniTCOM2018}, some upper bounds on the minimum distance of circulant-based  block and SC LDPC codes are derived, and it is shown that the smaller the girth of the graph, the smaller the minimum distance upper bound will be. Thus, improving the girth can result in a larger minimum distance.
}

We present a systematic framework to construct MD-SC codes, which is based on an informed relocation of circulants. MD-SC codes constructed using our proposed framework enjoy a notably lower population of short cycles, and consequently better performance compared to 1D-SC codes. 
Throughout this paper, the operator $\overset{p}{=}$ (resp., $\defopp$\hspace{0.05cm}) defines the congruence (resp., incongruence) modulo $p$, and the operator $(.)_{p}$ defines modulo $p$ of an integer\vspace{0cm}.

\begin{figure}
\centering
\includegraphics[width=0.26\textwidth]{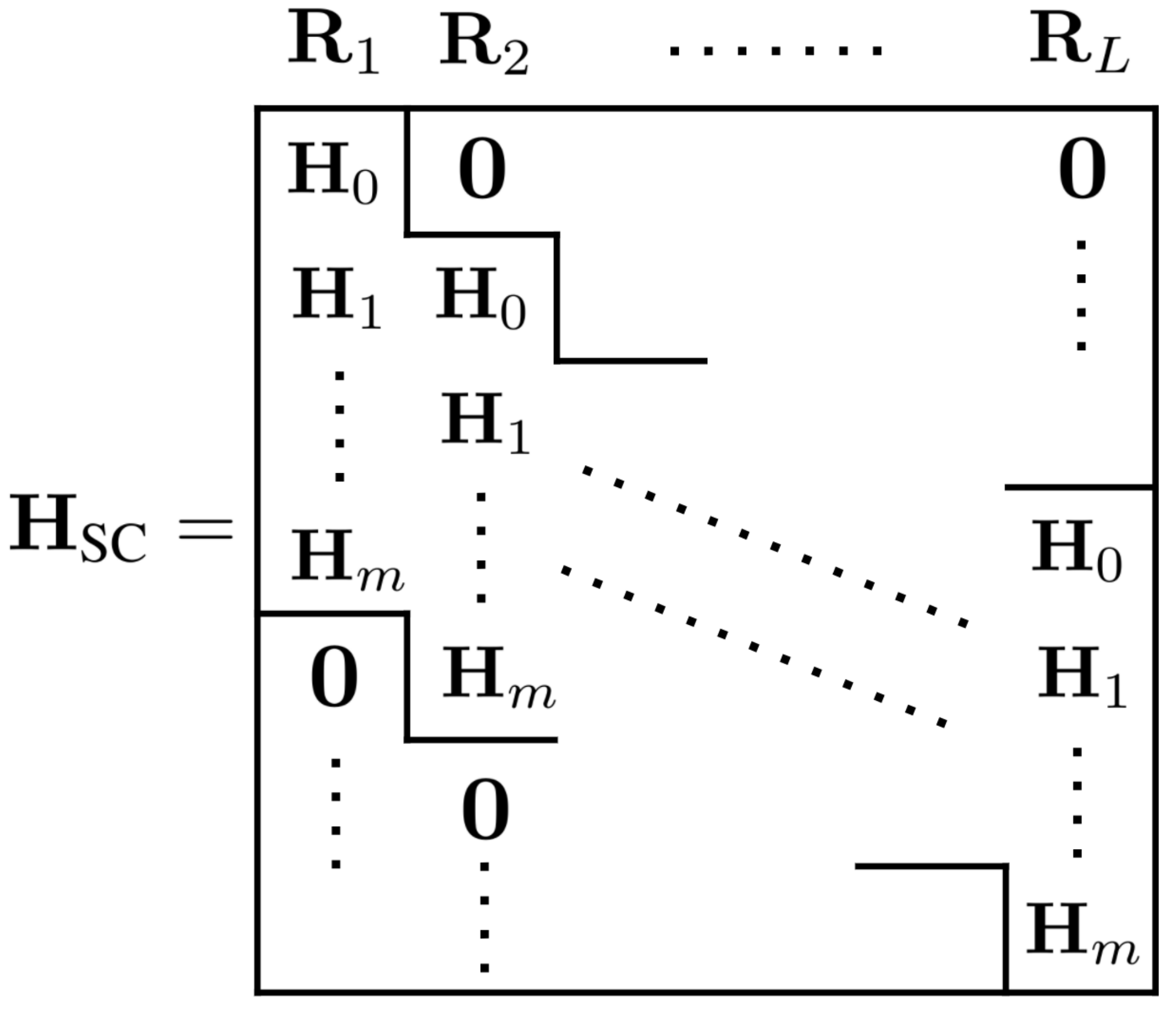}\vspace{0cm}
\caption{The parity-check matrix of an SC code with parameters $m$ and $L$.\vspace{0cm}}
\label{H_SC_structure}
\end{figure}

\newpage
\section{MD-SC Code Structure\vspace{0cm}}
In this section, we demonstrate the structure of our MD-SC codes. Our MD-SC codes have two main parameters: \textit{MD coupling depth} $d$ and \textit{MD coupling length} $L_2$. The parameter $L_2$ of an MD-SC code shows the number of SC codes that are connected together to form the MD-SC code. Each constituent SC code is connected to at most ($d-1$) following SC codes, sequenced in a cyclic order. Thus, $1\leq d\leq L_2$, and $d=1$ corresponds to $L_2$ disjoint 1D-SC codes.

{We intend to reduce the population of cycles with length $k$, or cycles-$k$, in our MD-SC code construction, and the parameter $k$ is an input to our scheme. A wise choice for $k$ is the girth \cite{MacKayIT1999}, or the length of the cycle that is the common denominator of several problematic combinatorial objects for a specific channel, e.g., AWGN channels \cite{EsfahanizadehTCOM2018}, partial response channels \cite{EsfahanizadehTMAG2017}, or Flash channels \cite{Hareedy2017ITW}. For instance, a cycle-$6$ is the common denominator of problematic combinatorial objects for AWGN channels, and a cycle-$8$ is the common denominator of problematic combinatorial objects for 
partial response channels even if the girth is $6$.}

{
An \textit{Auxiliary matrix} $\mathbf{A}_{t}$, $t\in\{1,\cdots,L_2-1\}$, has the same size as the parity-check matrix of the constituent 1D-SC code , i.e., $\mathbf{H}_\textnormal{SC}$, and appears in the parity-check matrix of the final MD-SC code, see (\ref{MD_structure}). The auxiliary matrices are all-zero matrices at the beginning of the framework and are filled with non-zero circulants during the construction process. A \textit{relocation} is defined as moving a non-zero circulant of $\mathbf{H}_\textnormal{SC}$ to the same position in one of the auxiliary matrices.}

Consider an SC code with parity-check matrix $\mathbf{H}_\textnormal{SC}$, memory $m$, and coupling length $L$ as the constituent 1D-SC code, and let  $\mathbf{R}_\nu$ be the middle replica of $\mathbf{H}_\textnormal{SC}$, i.e., $\nu=\lceil L/2 \rceil$. There are $\kappa\gamma$ non-zero circulants in this replica. Out of these $\kappa\gamma$ circulants, we choose $\mathcal{T}$ circulants that are the most problematic, i.e., that contribute to the highest number of cycles-$k$. The parameter $\mathcal{T}$ is called the \textit{MD coupling density}. We relocate the chosen circulants to auxiliary matrices $\mathbf{A}_1$, $\mathbf{A}_2$, $\dots$, $\mathbf{A}_{d-1}$ such that a relocated circulant from $\mathbf{H}_{\textnormal{SC}}$ is moved to the same position in one of the auxiliary matrices. The same relocations are repeated for all the ($L-1$) remaining replicas.
As a result,\vspace{0cm}
\begin{equation}
\mathbf{H}_\textnormal{SC}=\mathbf{H}_\textnormal{SC}'+\sum_{t=1}^{d-1}\mathbf{A}_{t},\vspace{0cm}
\end{equation}
where $\mathbf{H}_\textnormal{SC}'$ is derived from $\mathbf{H}_\textnormal{SC}$ by removing the $\mathcal{T}$ chosen circulants. We note that the middle replica $\mathbf{R}_\nu$ is considered for ranking the circulants in order to include all possible cycles-$k$ that a non-zero circulant in $\mathbf{H}_\textnormal{SC}$ can contribute to.
The parity-check matrix of the MD-SC code, $\mathbf{H}_\textnormal{SC}^\textnormal{MD}$, is constructed as follows, where $\mathbf{A}_{d}=\mathbf{A}_{d+1}=\dots=\mathbf{A}_{L_2-1}=\mathbf{0}$: (The non-zero auxiliary matrices are $\mathbf{A}_1$, $\mathbf{A}_2$, $\dots$, $\mathbf{A}_{d-1}$.)\vspace{0cm}
\begin{equation}
\label{MD_structure}
\mathbf{H}_\textnormal{SC}^\textnormal{MD}=\left[\begin{array}{cccc}
\mathbf{H}_\textnormal{SC}'&\mathbf{A}_{L_2-1}&\cdots&\mathbf{A}_{1}\\
\mathbf{A}_{1}&\mathbf{H}_\textnormal{SC}'&\cdots&\mathbf{A}_{2}\\
\vdots&\vdots&\ddots&\vdots\\
\mathbf{A}_{L_2-1}&\mathbf{A}_{L_2-2}&\cdots&\mathbf{H}_\textnormal{SC}'
\end{array}\right]{.}
\end{equation}

$\mathbf{H}_\textnormal{SC}^\textnormal{MD}$ can be viewed as a collection of $L_2$ rows and $L_2$ columns of \textit{segment}s $\mathbf{S}_{a,b}$, where $0\leq a\leq L_2-1$ and $0\leq b\leq L_2-1$. Each segment $\mathbf{S}_{a,b}$ is a matrix with the same dimension as $\mathbf{H}_\textnormal{SC}$, $\mathbf{S}_{a,a}=\mathbf{H}_\textnormal{SC}'$, $\mathbf{S}_{(a+t)_{\substack{\scriptscriptstyle L_2}},a}=\mathbf{A}_{t}$ for $t\in\{1,\cdots,d-1\}$, and $\mathbf{S}_{(a+t)_{\substack{\scriptscriptstyle L_2}},a}=\mathbf{0}$ for $t\in\{d,\cdots,L_2-1\}$\footnote{{magenta}
{While $\mathbf{H}_\textnormal{SC}^\textnormal{MD}$ may look similar to a block LDPC code, we would like to note that the locality of connections is always preserved during relocations. As such, the code does not reduce to a block LDPC code even when $d=L_2$.}\vspace{0cm}}.\vspace{0cm}

\begin{example}
\label{example_N_4}
Consider an SC code with $\gamma=2$, $\kappa=3$, $z=3$, $m=1$, and $L=3$. The matrix $\mathbf{H}$ of the underlying block code and the component matrices are given below:
\begin{equation*}
\mathbf{H}=\left[\begin{array}{ccc}
\sigma^{f_{0,0}}&\sigma^{f_{0,1}}&\sigma^{f_{0,2}}\\
\sigma^{f_{1,0}}&\sigma^{f_{1,1}}&\sigma^{f_{1,2}}
\end{array}\right]\hspace{-0.1cm},\hspace{0.1cm}
\mathbf{H}_0=\left[\begin{array}{ccc}
\sigma^{f_{0,0}}&\mathbf{0}&\sigma^{f_{0,2}}\\
\mathbf{0}&\sigma^{f_{1,1}}&\mathbf{0}
\end{array}\right]\hspace{-0.1cm},\hspace{0.1cm}
\mathbf{H}_1=\left[\begin{array}{ccc}
\mathbf{0}&\sigma^{f_{0,1}}&\mathbf{0}\\
\sigma^{f_{1,0}}&\mathbf{0}&\sigma^{f_{1,2}}
\end{array}\right]\hspace{-0.1cm}.
\end{equation*}
We intend to construct an MD-SC code with parameters $\mathcal{T}=1$, $d=2$, and $L_2=4$. Assume $\sigma^{f_{1,0}}$ is the most problematic circulant, and we relocate it to $\mathbf{A}_1$. This relocation is applied to all $L=3$ instances of the problematic circulant. We remind that each circulant corresponds to a group of $z$ connections in the graph of the SC code. Four constituent SC codes along with their problematic connections are depicted in Fig.~\ref{MDGraphicalExample}(a). The problematic connections are rewired to the same positions in the next SC codes, in a cyclic order, to construct the MD-SC code, Fig.~\ref{MDGraphicalExample}(b).\vspace{0cm}
\begin{figure}
\centering
\begin{tabular}{cc}
\hspace{-0.47cm}
\includegraphics[width=0.485\textwidth]{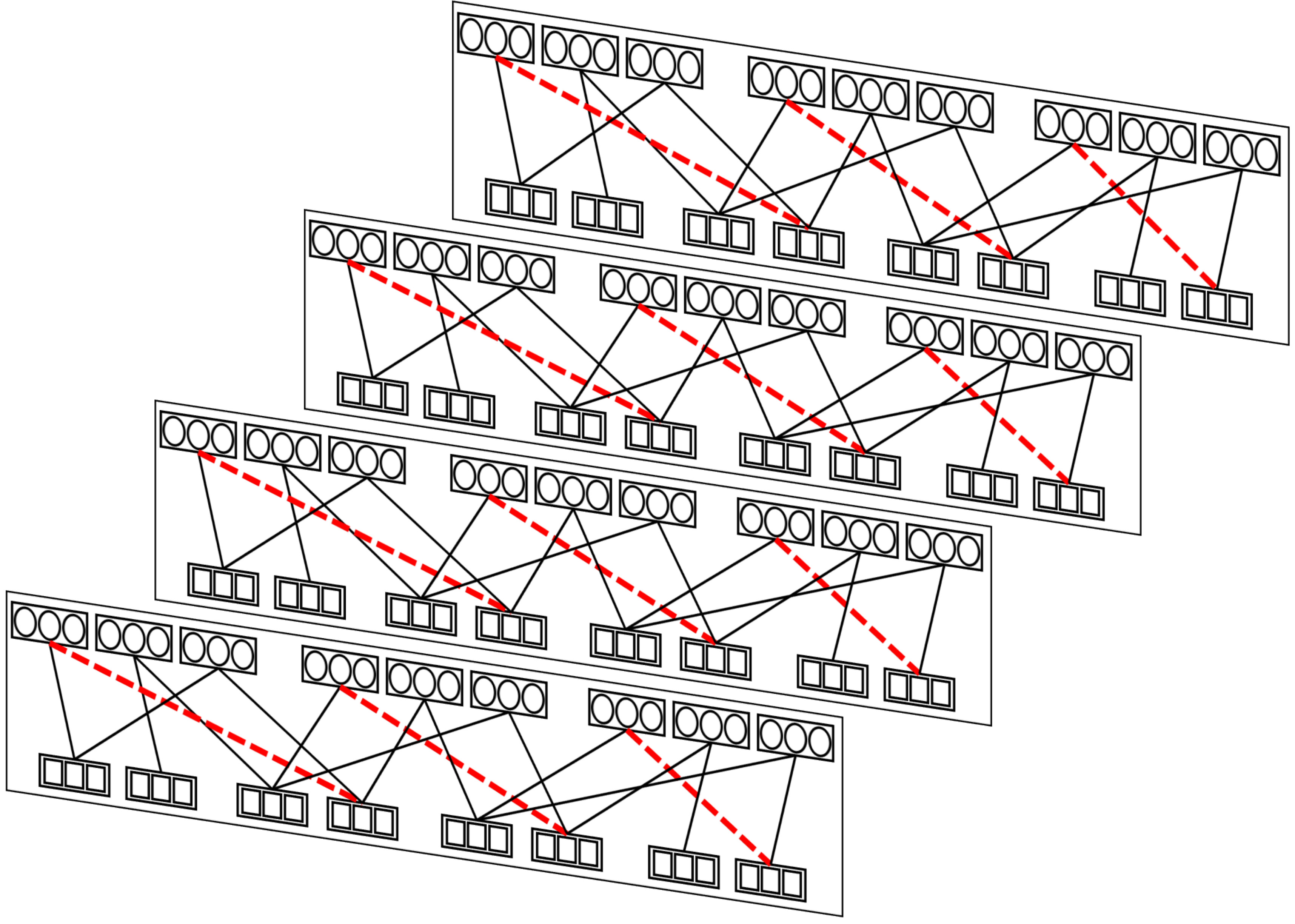}\hspace{-0.7cm}&\hspace{-0.5cm}
\includegraphics[width=0.485\textwidth]{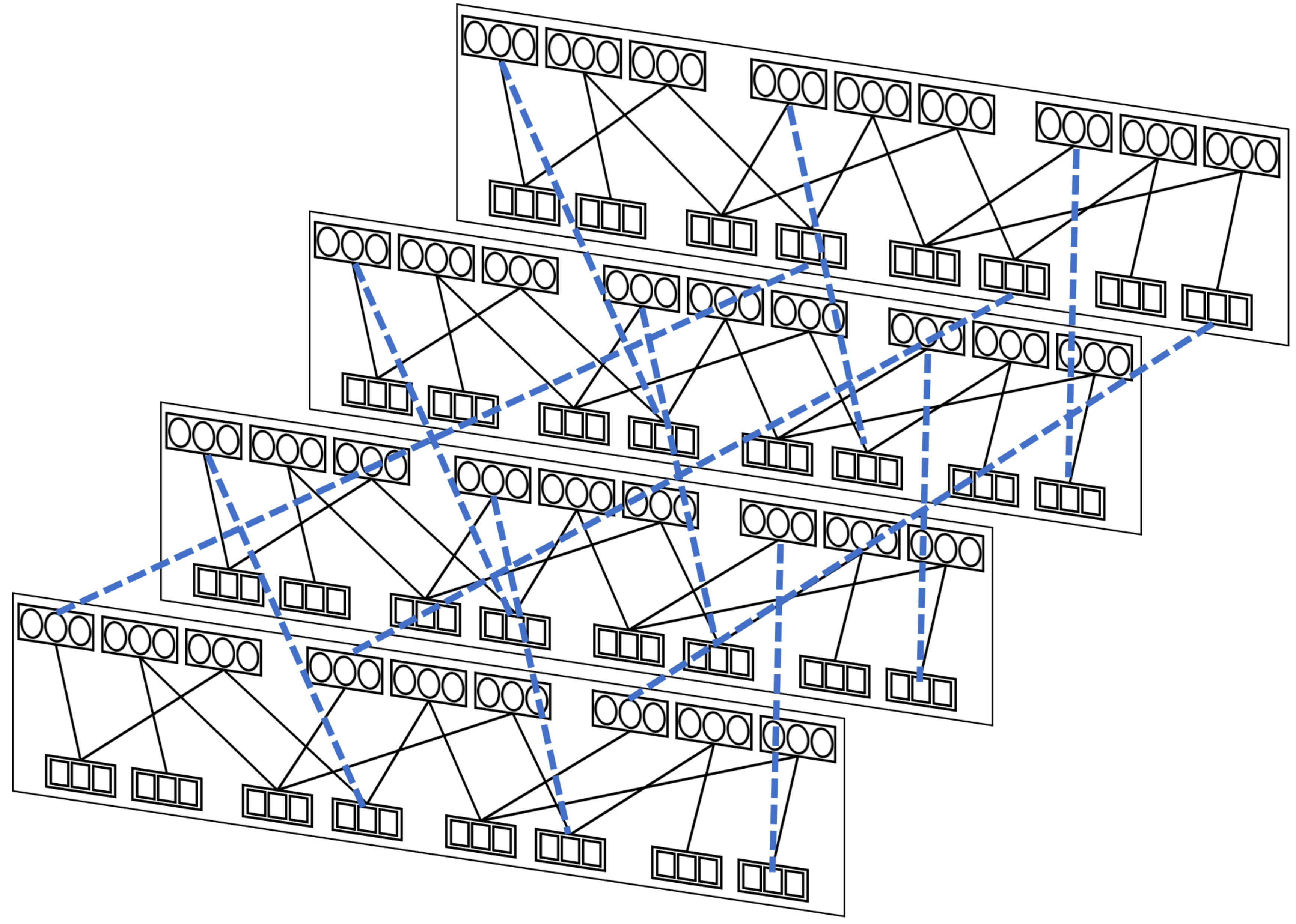}\\
(a)&(b)\vspace{0cm}
\end{tabular}\vspace{0cm}
\caption{(a) Four 1D-SC codes. Circles (resp., squares) represent VNs (resp., CNs). Each line represents a group of connections (defined by a circulant) from $z$ VNs to $z$ CNs. Problematic connections are shown in dashed red lines. (b) MD-SC code with $\mathcal{T}=1$, $d=2$, and $L_2=4$. Rewired connections are shown in dashed blue lines.\vspace{0cm}}
\label{MDGraphicalExample}
\end{figure}
\end{example}


\begin{definition}\textnormal{ }\vspace{0cm}
\begin{enumerate}
\item Let $\mathcal{C}_{i,j}$, where $0\leq i\leq (L+m)\gamma {-}1$ and $0\leq j\leq L\kappa {-} 1$, be a non-zero circulant in $\mathbf{H}_\textnormal{SC}$. We say $\mathcal{C}_{i,j}$ is relocated to $\mathbf{A}_t$, where $t\in\{1,\cdots,d-1\}$, if it is moved from $\mathbf{H}_\textnormal{SC}$ to $\mathbf{A}_t$. We denote this relocation as $\mathcal{C}_{i,j}{\rightarrow}\mathbf{A}_t$.
\item $\mathcal{C}_{i,j}\textnormal{@}\mathbf{S}_{a,b}$ refers to the circulant $\mathcal{C}_{i,j}$ in segment $\mathbf{S}_{a,b}$. When $\mathcal{C}_{i,j}{\rightarrow}\mathbf{A}_t$, the value of $\mathcal{C}_{i,j}\textnormal{@}\mathbf{S}_{a,a}$ is copied to $\mathcal{C}_{i,j}\textnormal{@}\mathbf{S}_{(a+t)_{\substack{\scriptscriptstyle L_2}},a}$, and $\mathcal{C}_{i,j}\textnormal{@}\mathbf{S}_{a,a}$ becomes zero ($a\in\{0,\cdots,L_2-1\}$ and $t\in\{1,\cdots,d-1\}$).
\item The MD mapping $M:\{\mathcal{C}_{i,j}\}{\rightarrow}\{0,\cdots,d-1\}$ is a mapping from a non-zero circulant in $\mathbf{H}_\textnormal{SC}$ to an integer in $\{0,\cdots,d-1\}$, and it is defined as follows:
\begin{enumerate}
\item If $\mathcal{C}_{i,j}{\rightarrow}\mathbf{A}_t$, $M(\mathcal{C}_{i,j})=t$.
\item {If $\mathcal{C}_{i,j}$ is kept in $\mathbf{H}_\textnormal{SC}'$ (no relocation), $M(\mathcal{C}_{i,j})=0$.}
\end{enumerate}
\item A cycle-$k$, or $\mathcal{O}_k$, visits $k$ circulants in the parity-check matrix of the code. We list the $k$ circulants of $\mathcal{O}_k$, according to the order they are visited when the cycle is traversed in a clockwise direction, in a sequence as $C_{\mathcal{O}_k}=\{\mathcal{C}_{i_1,j_1},\mathcal{C}_{i_2,j_2},\dots,\mathcal{C}_{i_k,j_k}\}$, where $i_1=i_2,j_2=j_3,\dots,i_{k-1}=i_k,j_k=j_1$. A circulant can be visited more than once, e.g., Fig.~\ref{cyclesexamples}.
\item We denote the distance between two circulants $\mathcal{C}_{i_u,j_u}$ and $\mathcal{C}_{i_v,j_v}$ on a cycle $\mathcal{O}_k$, where $u,v\in\{1,\dots,k\}$, as $D_{\mathcal{O}_k}(\mathcal{C}_{i_u,j_u},\mathcal{C}_{i_v,j_v})\in\{0,\dots,k-1\}$.
By definition, $D_{\mathcal{O}_k}(\mathcal{C}_{i_u,j_u},\mathcal{C}_{i_v,j_v})=|v-u|$.
\end{enumerate}\vspace{0cm}
\end{definition}
\begin{figure}
\centering
\vspace{0cm}
\begin{tabular}{cc}
\includegraphics[width=0.25\textwidth]{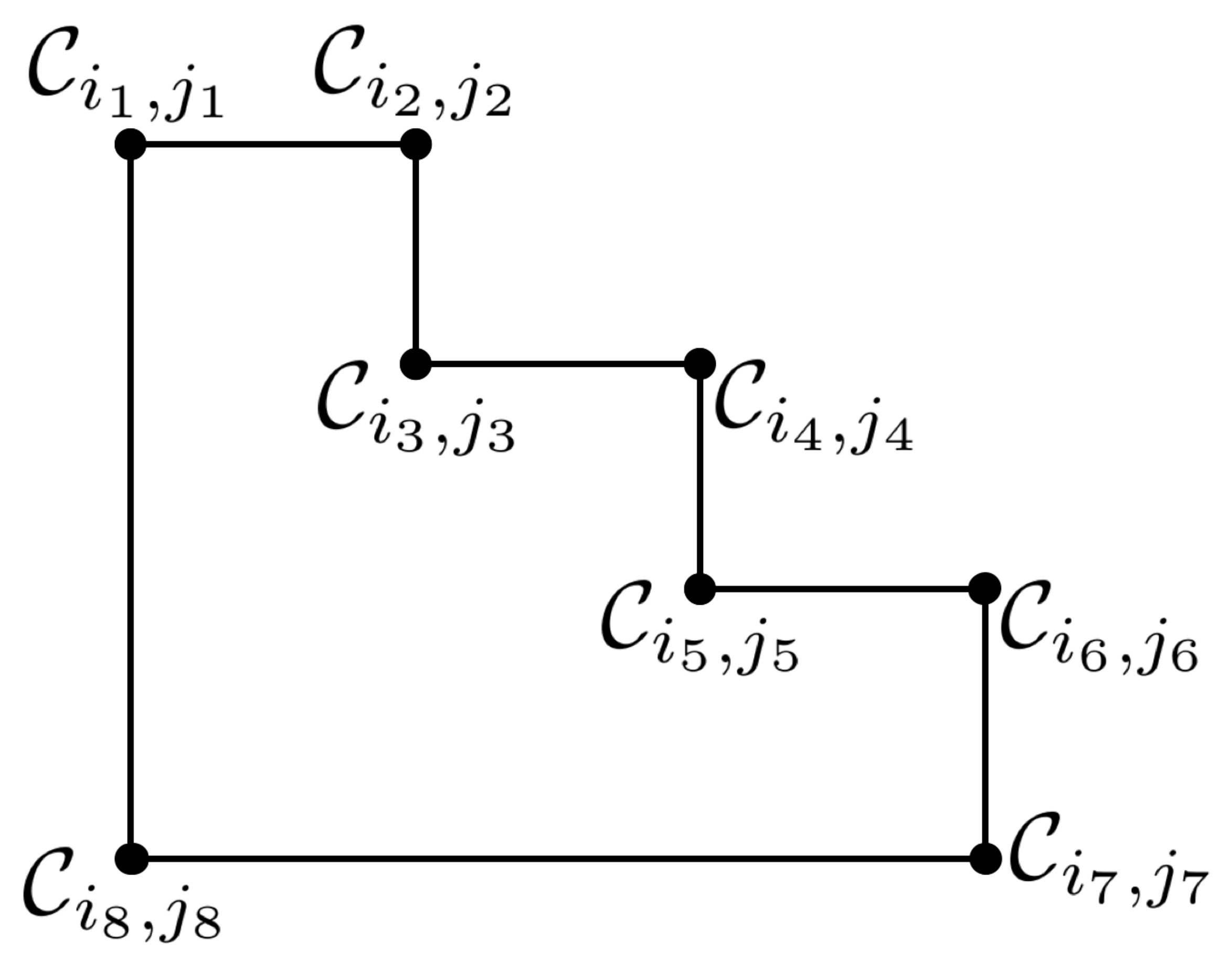}&
\includegraphics[width=0.25\textwidth]{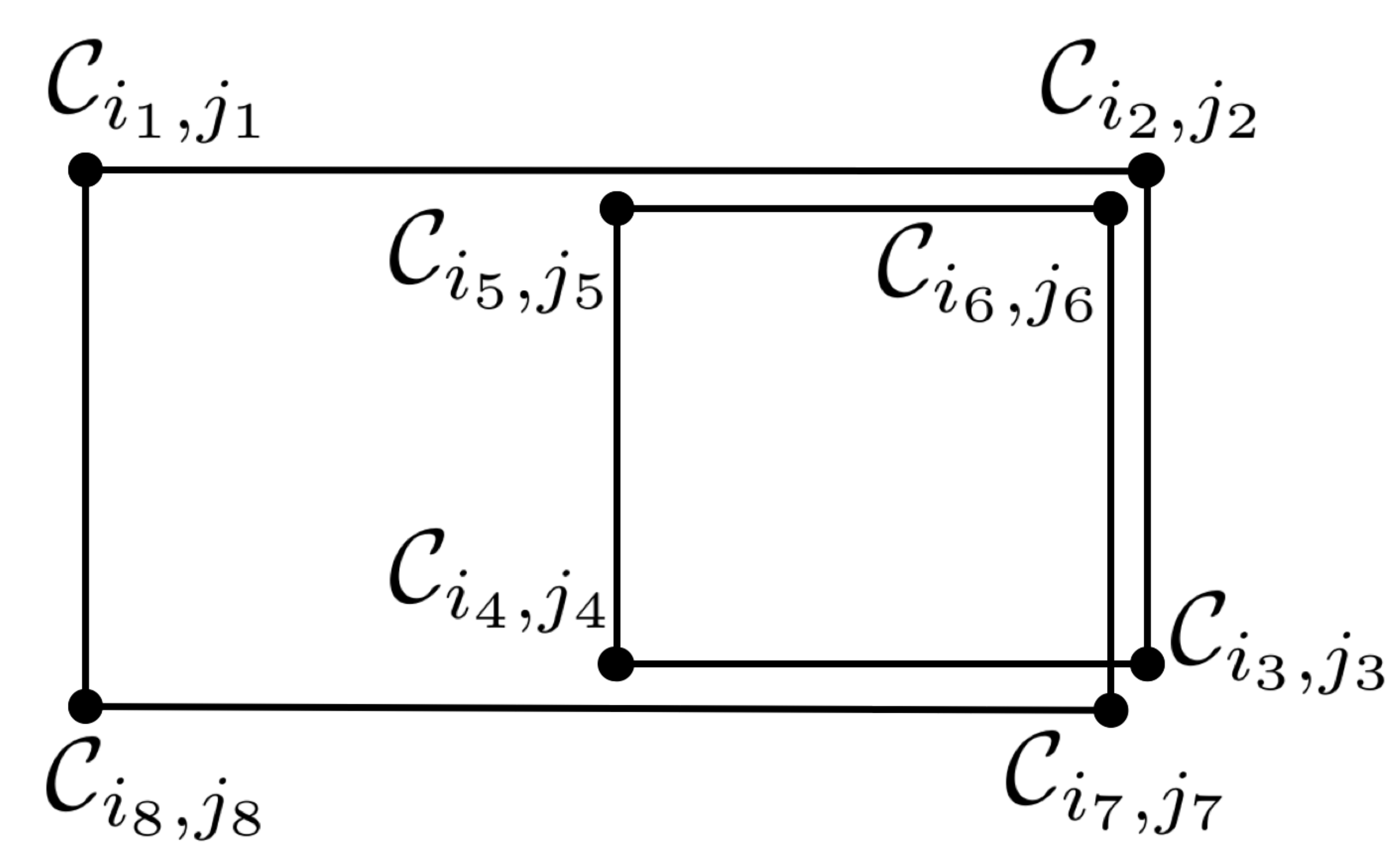}\vspace{0cm}\\
(a)&(b)\vspace{0cm}
\end{tabular}\vspace{0cm}
\caption{Cycles-$8$ with $C_{\mathcal{O}_8}=\{\mathcal{C}_{i_1,j_1},\dots,\mathcal{C}_{i_8,j_8}\}$. Each line represents a connection between two circulants. (a) All circulants are unique. (b) $\mathcal{C}_{i_6,j_6}=\mathcal{C}_{i_2,j_2}$ and $\mathcal{C}_{i_7,j_7}=\mathcal{C}_{i_3,j_3}$.\vspace{0cm}}
\label{cyclesexamples}
\end{figure}
In the new MD-SC code design framework, we effectively answer two questions: which circulants to relocate, and where to relocate them.\vspace{0cm} 

\section{Novel Framework for MD-SC Code Design\vspace{0cm}}
In this section, we present a new framework for constructing MD-SC codes. First, we investigate the effects of relocating a subset of circulants on the population of cycles. Then, we present our algorithm for constructing MD-SC codes which is based on a score voting policy.\vspace{0cm}

\subsection{The Effects of Relocation of Circulants on Cycles}
Consider a cycle $\mathcal{O}_k$ in $\mathbf{H}_\textnormal{SC}$ with sequence of circulants $C_{\mathcal{O}_k}$. Prior to any relocation, there are $L_2$ instances of this cycle in the MD-SC code with parameter $L_2$, one per each constituent SC code.
We investigate the effect of relocating a subset of circulants of $\mathcal{O}_k$, and we call this subset \textit{targeted circulants}. We show that, after relocations, $L_2$ instances of circulants of $C_{\mathcal{O}_k}$ can form $L_2$ cycles of length $k$, $L_2/2$ cycles of length $2k$, $\dots$, or one cycle of length $L_2k$. The first case is a result of bad choices for relocations, and the rest are more preferable. In fact, we opt for the relocations that result in larger cycles (with smaller cardinality as a result).\vspace{0cm}

\begin{theorem}
Let $C_{\mathcal{O}_k}=\{\mathcal{C}_{i_1,j_1},\mathcal{C}_{i_2,j_2},\dots,\mathcal{C}_{i_k,j_k}\}$ be the sequence of circulants in $\mathbf{H}_\textnormal{SC}$ that are visited in a clockwise order by $\mathcal{O}_k$. If the following equation holds, the $L_2$ instances of circulants of $C_{\mathcal{O}_k}$ form $L_2$ cycles-$k$ in $\mathbf{H}_\textnormal{SC}^\textnormal{MD}$,\vspace{0cm}
\begin{equation}\label{mainIneffReloc}
\sum_{u=1}^{k}(-1)^{u}M(\mathcal{C}_{i_u,j_u})\overset{L_2}{=}0.
\vspace{0cm}
\end{equation}
Otherwise, the instances of the targeted circulants do not result in cycles-$k$ in $\mathbf{H}_\textnormal{SC}^\textnormal{MD}$
\footnote{{Equation (\ref{mainIneffReloc}) resembles Fossorier's condition on circulant powers of a CB code that makes a cycle in the protograph result in multiple cycles in the lifted graph of the code \cite{FossorierIT2004}.}}. We call (\ref{mainIneffReloc}) the \textit{Ineffective Relocation Condition}, or IRC, in the rest of this paper.\vspace{0cm}
\end{theorem}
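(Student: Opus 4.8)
The plan is to track, for each of the $L_2$ instances of the cycle, how its $k$ circulants are distributed among the block rows and block columns of the segments $\mathbf{S}_{a,b}$ in $\mathbf{H}_\textnormal{SC}^\textnormal{MD}$, and then to determine exactly when a single instance closes upon itself after $k$ steps (producing a cycle-$k$) versus when several instances merge into one longer cycle.

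First I would record the placement rule implied by (\ref{MD_structure}) and the Definition: a non-zero circulant $\mathcal{C}_{i,j}$ with $M(\mathcal{C}_{i,j})=t$ occupies, within block column $b$, the segment in block row $(b+t)_{L_2}$; equivalently, its block row $a$ and block column $b$ always satisfy $a\overset{L_2}{=}b+M(\mathcal{C}_{i,j})$. Since all segments in a common block row (resp.\ block column) share the same CNs (resp.\ VNs), a cycle edge enforcing $i_u=i_{u+1}$ (a shared row group) keeps the block row fixed, while an edge enforcing $j_u=j_{u+1}$ (a shared column group) keeps the block column fixed. Because $\mathcal{O}_k$ alternates row and column adjacencies, $i_1=i_2,\ j_2=j_3,\dots,\ i_{k-1}=i_k,\ j_k=j_1$, these two transition types alternate along the traversal, with the closing edge $j_k=j_1$ being a column adjacency (and $k$ necessarily even).

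Next I would propagate a starting block column $b_1\in\{0,\dots,L_2-1\}$ through the $k$ circulants. Using the placement rule, a row adjacency $(u,u+1)$ forces $b_{u+1}\overset{L_2}{=}b_u+M(\mathcal{C}_{i_u,j_u})-M(\mathcal{C}_{i_{u+1},j_{u+1}})$, whereas a column adjacency leaves the block column unchanged. Accumulating over one full traversal gives the net block-column shift $b_k-b_1\overset{L_2}{=}\sum_{u=1}^{k}(-1)^{u+1}M(\mathcal{C}_{i_u,j_u})$, the alternating sign arising because consecutive relocated circulants sit on opposite sides of the alternating row/column transitions. The closure requirement imposed by the final column adjacency $j_k=j_1$ is $b_1\overset{L_2}{=}b_k$, which is equivalent to $\sum_{u=1}^{k}(-1)^{u}M(\mathcal{C}_{i_u,j_u})\overset{L_2}{=}0$, i.e.\ to (\ref{mainIneffReloc}). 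When the IRC holds, every choice of $b_1$ closes after exactly $k$ steps, so the $L_2$ instances yield $L_2$ cycles-$k$.

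Finally, for the converse I would view repeated traversals as the action of the shift $\Delta:=\left(\sum_{u=1}^{k}(-1)^{u+1}M(\mathcal{C}_{i_u,j_u})\right)_{L_2}$ on $\mathbb{Z}_{L_2}$: an instance closes only after $r$ traversals, where $r$ is the smallest positive integer with $r\Delta\overset{L_2}{=}0$, namely $r=L_2/\gcd(\Delta,L_2)$. Hence when (\ref{mainIneffReloc}) fails ($\Delta\neq 0$) the instances coalesce into $\gcd(\Delta,L_2)=L_2/r$ cycles, each of length $rk>k$, so none of them has length $k$ and the relocated (targeted) circulants produce no cycle-$k$, as claimed. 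I expect the main obstacle to be the bookkeeping of the third paragraph, namely pinning down that the block column is read off differently after a row transition than after a column transition, and thereby justifying the alternating sign $(-1)^u$ rather than a uniform one; the modular/orbit argument for the \emph{otherwise} case is then routine.
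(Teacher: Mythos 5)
Your proof is correct and follows essentially the same route as the paper's: you track the net segment displacement accumulated over the alternating row/column adjacencies of the cycle and show closure occurs precisely when the alternating sum $\sum_{u=1}^{k}(-1)^{u}M(\mathcal{C}_{i_u,j_u})$ vanishes modulo $L_2$, which is exactly the paper's computation of the total MD horizontal/vertical shifts (your single block-column coordinate with the invariant $a\overset{L_2}{=}b+M(\mathcal{C}_{i,j})$ is equivalent bookkeeping). Your explicit orbit/gcd argument for the ``otherwise'' case is also sound; the paper defers that quantitative statement to its Theorem~2, but the substance is the same.
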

\begin{proof}
Let $(\mathcal{C}_{i_u,j_u},\mathcal{C}_{i_{u+1},j_{u+1}})$ be a pair of consecutive circulants in $C_{\mathcal{O}_k}$, where $u\in\{1,\dots,k\}$ and $\mathcal{C}_{i_{k+1},j_{k+1}}=\mathcal{C}_{i_1,j_1}$. By definition, two circulants have the same row (resp., column) group index, i.e., $i_u=i_{u+1}$ (resp., $j_u=j_{u+1}$), when $u\overset{2}{=}1$ (resp., $u\overset{2}{=}0$).
Before relocations, $\mathcal{C}_{i_u,j_u}\textnormal{@}\mathbf{S}_{a,a}\neq\mathbf{0}$ and $\mathcal{C}_{i_u,j_u}\textnormal{@}\mathbf{S}_{a,b}=\mathbf{0}$, where $\mathcal{C}_{i_u,j_u}\in C_{\mathcal{O}_k}$, $a,b\in\{0,\cdots,L_2-1\}$, and $a\neq b$. This results in $L_2$ instances of $\mathcal{O}_k$, one per each segment $\mathbf{S}_{a,a}$. After relocations, the circulants in $C_{\mathcal{O}_k}$ do not all belong to the same segment. 

Here, a unit of a MD horizontal (resp., MD vertical) shift is defined as cyclically going one segment right (resp., down) when we go from $\mathcal{C}_{i_u,j_u}$ to $\mathcal{C}_{i_{u+1},j_{u+1}}$.
The cycle $\mathcal{O}_k$ reflects in the MD-SC code as cycles with the same length $k$ if and only if when we start from $\mathcal{C}_{i_1,j_1}\neq\mathbf{0}$ from one segment and traverse the circulants of the cycle  in a clock wise order (with the same order they appear in $C_{\mathcal{O}_k}$), we end up at the same segment that we started with.

The segments of $\mathbf{H}_\textnormal{SC}^\textnormal{MD}$ appear in the cyclic order $\{\mathbf{H}_\textnormal{SC}',\mathbf{A}_{L_2-1},\cdots,\mathbf{A}_1\}$, with the MD mapping $\{0,L_2-1,\cdots,1\}$, from left to right. These segments appear in the cyclic order $\{\mathbf{H}_\textnormal{SC}',\mathbf{A}_1,\cdots,\mathbf{A}_{L_2-1}\}$, with the MD mapping $\{0,1,\cdots,L_2-1\}$, from top to bottom, see (\ref{MD_structure}). Thus, the MD horizontal shift, when we go from $\mathcal{C}_{i_u,j_u}$ to $\mathcal{C}_{i_{u+1},j_{u+1}}$, $u\in\{1,3,\dots,k-1\}$, is $(M(\mathcal{C}_{i_u,j_u})-M(\mathcal{C}_{i_{u+1},j_{u+1}}))_{\substack{{ }\\{L_2}}}$ units, see Definition 1.3. Similarly, the MD vertical shift, when we go from $\mathcal{C}_{i_u,j_u}$ to $\mathcal{C}_{i_{u+1},j_{u+1}}$, $u\in\{0,2,\dots,k\}$, is $(M(\mathcal{C}_{i_{u+1},j_{u+1}})-M(\mathcal{C}_{i_u,j_u}))_{\substack{{ }\\{L_2}}}$ units. We remind that the operator $(.)_{p}$ defines modulo $p$ of an integer. 
The total MD horizontal and vertical shifts when we traverse the circulants of $\mathcal{O}_k$ in $\mathbf{H}_\textnormal{SC}^\textnormal{MD}$ are $\delta_{H}$ and $\delta_{V}$, respectively:\vspace{0cm}
\begin{equation}
\begin{split}
&\delta_{H}=(\hspace{-0.5cm}\sum_{u\in\{1,3\dots,k-1\}}\hspace{-0.5cm}[M(\mathcal{C}_{i_u,j_u})-M(\mathcal{C}_{i_{u+1},j_{u+1}})])_{\substack{{ }\\{L_2}}}=\hspace{-0.07cm}(-\sum_{u=1}^{k}[(-1)^{u}M(\mathcal{C}_{i_u,j_u})])_{\substack{{ }\\{L_2}}}{,}\\
&\delta_{V}=(\hspace{-0.34cm}\sum_{u\in\{2,4\dots,k\}}\hspace{-0.34cm}[M(\mathcal{C}_{i_{u+1},j_{u+1}})-M(\mathcal{C}_{i_u,j_u})])_{\substack{{ }\\{L_2}}}=\hspace{-0.07cm}(-\sum_{u=1}^{k}[(-1)^{u}M(\mathcal{C}_{i_u,j_u})])_{\substack{{ }\\{L_2}}}.\vspace{0cm}
\end{split}
\vspace{-0cm}
\end{equation}

The relocations are ineffective if and only if the start and end segments are the same when we traverse the $k$ circulants of $\mathcal{O}_k$. For this to happen, the total MD horizontal and vertical shifts ($\delta_{H}$ and $\delta_{V}$) need to be zero, which results in (\ref{mainIneffReloc}).
\end{proof}


If equation (\ref{mainIneffReloc}), or IRC, holds for the circulants of $\mathcal{O}_k$, $L_2$ instances of circulants of $C_{\mathcal{O}_k}$ in $\mathbf{H}_\textnormal{SC}^\textnormal{MD}$ form $L_2$ cycles-$k$ in the MD-SC code (unpreferable). Theorem~2 investigates the situation when IRC does not necessarily hold.\vspace{0cm}

\begin{theorem}
Each cycle $\mathcal{O}_k$ in $\mathbf{H}_\textnormal{SC}$ results in $\tau$ cycles with length $L_2k/\tau$ in $\mathbf{H}_\textnormal{SC}^\textnormal{MD}$, where
\begin{equation}
\tau=\gcd(L_2,\Delta_{\mathcal{O}_k}),
\text{ and }\hspace{1cm}
\Delta_{\mathcal{O}_k}=(-\sum_{u=1}^{k}[(-1)^{u}M(\mathcal{C}_{i_u,j_u})])_{\substack{\\\scriptstyle{L_2}}}.\vspace{0cm}
\label{Theorem2Equ}
\end{equation}
The operator $\gcd$ outputs the greatest common divisor of its two operands.\vspace{0cm}
\end{theorem}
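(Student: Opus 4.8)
The plan is to build directly on Theorem~1 by tracking, segment by segment, where a single clockwise traversal of the $k$ circulants of $\mathcal{O}_k$ lands us, and then iterating that traversal until the walk closes on itself. From the proof of Theorem~1, one full pass through $C_{\mathcal{O}_k}$ produces a net MD horizontal shift $\delta_H=\Delta_{\mathcal{O}_k}$ and a net MD vertical shift $\delta_V=\Delta_{\mathcal{O}_k}$, with $\Delta_{\mathcal{O}_k}$ as in (\ref{Theorem2Equ}). I would first record the key structural consequence: because $\delta_H=\delta_V$, the difference between the row-block and column-block indices of the current segment is invariant under a full pass. Since, by the MD mapping, an instance of $\mathcal{C}_{i_u,j_u}$ sitting in segment $\mathbf{S}_{R,C}$ satisfies $(R-C)_{L_2}=M(\mathcal{C}_{i_u,j_u})$, this invariance guarantees that after each pass we arrive at an instance of the \emph{same} starting circulant $\mathcal{C}_{i_1,j_1}$, merely relocated to a different diagonal of segments.

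Next I would reduce the entire traversal to a walk on the $L_2$ column-block indices. Starting from the instance of $\mathcal{C}_{i_1,j_1}$ in the segment with column-block $C$, one full pass moves us to the instance with column-block $(C+\Delta_{\mathcal{O}_k})_{L_2}$. Hence $n$ consecutive passes return to the starting segment precisely when $n\Delta_{\mathcal{O}_k}\overset{L_2}{=}0$, and the smallest such positive $n$ is $n=L_2/\gcd(L_2,\Delta_{\mathcal{O}_k})=L_2/\tau$. A closed cycle in $\mathbf{H}_\textnormal{SC}^\textnormal{MD}$ is therefore obtained only after $n$ passes, so its length is $nk=L_2k/\tau$, which is the claimed length.

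Finally, I would count the distinct cycles produced. The $L_2$ instances of $\mathcal{C}_{i_1,j_1}$ (one per column-block) partition into orbits under the translation $C\mapsto(C+\Delta_{\mathcal{O}_k})_{L_2}$; equivalently, into the cosets of the cyclic subgroup $\langle\Delta_{\mathcal{O}_k}\rangle$ of $\mathbb{Z}/L_2\mathbb{Z}$, which has order $L_2/\tau=n$. There are exactly $L_2/n=\tau$ such cosets, and each one is traced out by a single closed cycle of length $L_2k/\tau$. This yields $\tau$ cycles of length $L_2k/\tau$, as claimed; as a consistency check, when IRC holds we have $\Delta_{\mathcal{O}_k}=0$, so $\tau=\gcd(L_2,0)=L_2$, recovering the $L_2$ cycles-$k$ of Theorem~1.

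The step I expect to be the main obstacle is the bookkeeping that makes the orbit-counting rigorous: I must verify that distinct cosets give genuinely distinct cycles and that a single coset closes after exactly $n$ passes and not sooner, so that no cycle is double-counted and every instance of the circulants of $C_{\mathcal{O}_k}$ is accounted for. I would also need to confirm that the argument carried out at the circulant (protograph) level lifts correctly to the $z$ underlying connections per circulant; this follows because a relocation moves a circulant rigidly to the identical position in an auxiliary matrix, leaving the intra-circulant permutation---and hence the cycle structure at the node level---untouched.
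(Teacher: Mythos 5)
Your proposal is correct and follows essentially the same route as the paper's proof: both track the net per-pass segment shift $\Delta_{\mathcal{O}_k}$ (right and down), identify the minimal number of passes $\lambda=L_2/\gcd(L_2,\Delta_{\mathcal{O}_k})$ needed for the walk to close, and conclude that the $L_2$ non-zero instances split into $\tau=\gcd(L_2,\Delta_{\mathcal{O}_k})$ cycles of length $\lambda k$. Your coset/orbit formalization and the explicit invariant $(R-C)_{L_2}=M(\mathcal{C}_{i_u,j_u})$ simply make rigorous the bookkeeping that the paper carries out by dividing the $L_2k$ circulant instances by the cycle length $\lambda k$.
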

\begin{proof}
Consider a cycle $\mathcal{O}_k$ with $C_{\mathcal{O}_k}=\{\mathcal{C}_{i_1,j_1},\dots,\mathcal{C}_{i_k,j_k}\}$ in $\mathbf{H}_\textnormal{SC}$. There are $(L_2)^2$ instances of $\mathcal{C}_{i_u,j_u}$ in $\mathbf{H}_\textnormal{SC}^\textnormal{MD}$, $u\in\{1,\dots,k\}$, one per each segment, and only $L_2$ of them can be non-zero. We traverse the circulants of $\mathcal{O}_k$ in $\mathbf{H}_\textnormal{SC}^\textnormal{MD}$ according to the order they appear in $C_{\mathcal{O}_k}$ starting from a non-zero instance of $\mathcal{C}_{i_1,j_1}$.
After traversing all $k$ circulants, we reach circulant $\mathcal{C}_{i_1,j_1}$ in a segment that is (cyclically) $\Delta_{\mathcal{O}_k}$ units right and $\Delta_{\mathcal{O}_k}$ units down from the segment we started.

If $\Delta_{\mathcal{O}_k}=0$, the cycle is complete after traversing the $k$ circulants. In this case, there are $L_2$ instances of $C_{\mathcal{O}_k}$, one per each non-zero instance of $\mathcal{C}_{i_1,j_1}$. If $\Delta_{\mathcal{O}_k}\neq0$, the cycle cannot be complete after traversing $k$ circulants. We proceed traversing the circulants until we reach $\mathcal{C}_{i_1,j_1}$ that is in the same segment that we started from. 

We define the parameter $\lambda$ as follows:\vspace{0cm}
\begin{equation}
\lambda=\min\{g|g\in\{1,2,\cdots\},g\Delta_{\mathcal{O}_k}\overset{L_2}{=}0\}.\vspace{0cm}
\end{equation}
Then, we complete the cycle after traversing $\lambda k$ circulants. The parameter $\lambda$ is the minimum integer value such that $\lambda\Delta_{\mathcal{O}_k}\overset{L_2}{=}0$, i.e., $\lambda=L_2/\gcd{(L_2,\Delta_{\mathcal{O}_k})}$. The $L_2$ non-zero instances of the $k$ circulants  in $C_{\mathcal{O}_k}$ form $ \tau = L_2k/\lambda k= \gcd(L_2,\Delta_{\mathcal{O}_k})$ cycles of the length $\lambda k = L_2k/\tau$.\vspace{-0cm}
\end{proof}

For example, when $L_2$ and $\Delta_{\mathcal{O}_k}$ are relatively prime, there is a cycle with length $L_2k$ that traverses all non-zero instances of the circulants of $C_{\mathcal{O}_k}$. When $\tau=\gcd{(L_2,\Delta_{\mathcal{O}_k})}=L_2$, the non-zero instances of the circulants of $C_{\mathcal{O}_k}$ form $L_2$ cycles with length $k$.
In our algorithm for the MD-SC code construction, the relocations that result in smaller $\tau$ are more preferred as they result in larger cycles.\vspace{0cm}
\begin{remark}
Review some properties of $\gcd$ that are used in the rest of this paper:
\begin{itemize}
\item $\gcd(a,0)=|a|$ for any non-zero $a$,
\item $\gcd(a+yb,b)=\gcd(a,b)$ for any integer $y$,
\item $\gcd(-a,b)=\gcd(a,b)$.\vspace{0cm}
\end{itemize}
\end{remark}

\begin{example} Let $C_{\mathcal{O}_k}=\{\mathcal{C}_{i_1,j_1},\dots,\mathcal{C}_{i_k,j_k}\}$ be the sequence of circulants of $\mathcal{O}_k$, and $n$ be the number of its relocated circulants.

\begin{enumerate}[leftmargin=0.5cm]
\item Let $n=1$, $\mathcal{C}_{i_a,j_a}\rightarrow\mathbf{A}_1$, and $L_2=3$. Then, $\Delta_{\mathcal{O}_k}=((-1)^a)_3$ and $\tau=1$. Fig.~\ref{relocation_examples}(a) shows $\mathcal{C}_{i_a,j_a}{\rightarrow}{\mathbf{A}_1}$. Fig.~\ref{relocation_examples}(b) shows that a cycle-$3k$ (shown in orange) is formed. The green  border represents that this relocation is preferable.

\item Let $n=2$, $\mathcal{C}_{i_a,j_a},\mathcal{C}_{i_b,j_b}{\rightarrow}{\mathbf{A}_1}$, and $L_2=3$. Suppose $D_{\mathcal{O}_k}(\mathcal{C}_{i_a,j_a},\mathcal{C}_{i_b,j_b})=1$. Then,  $\Delta_{\mathcal{O}_k}=((-1)^a-(-1)^a)_3=0$ and $\tau=L_2=3$. Fig.~\ref{relocation_examples}(c) shows $\mathcal{C}_{i_a,j_a},\mathcal{C}_{i_b,j_b}{\rightarrow}{\mathbf{A}_1}$. Fig.~\ref{relocation_examples}(d) shows that three cycles-$k$ are formed. The red border represents that these relocations are unpreferable.

\item Let $n=3$, and $\mathcal{C}_{i_a,j_a}$, $\mathcal{C}_{i_b,j_b}$, $\mathcal{C}_{i_c,j_c}{\rightarrow}{\mathbf{A}_2}$, and $L_2=4$. Suppose these three circulants are consecutive in $C_{\mathcal{O}_k}$. Then, $\Delta_{\mathcal{O}_k}=((-1)^a(2-2+2))_{\substack{\\\scriptstyle{4}}}=2$ and $\tau=2$. Fig.~\ref{relocation_examples}(e) shows $\mathcal{C}_{i_a,j_a},\mathcal{C}_{i_b,j_b},\mathcal{C}_{i_c,j_c}{\rightarrow}\mathbf{A}_2$. Fig.~\ref{relocation_examples}(f) shows that two cycles-$2k$ are formed. The red border represents that these relocations are less preferred.  We note that if we relocated the targeted circulants to $\mathbf{A}_1$ instead, the result would be one cycle-$4k$ which is more preferred.\vspace{0cm} 


\end{enumerate}
\begin{figure}\vspace{0cm}
\centering
\begin{tabular}{ccc}
\includegraphics[height=0.16\textwidth]{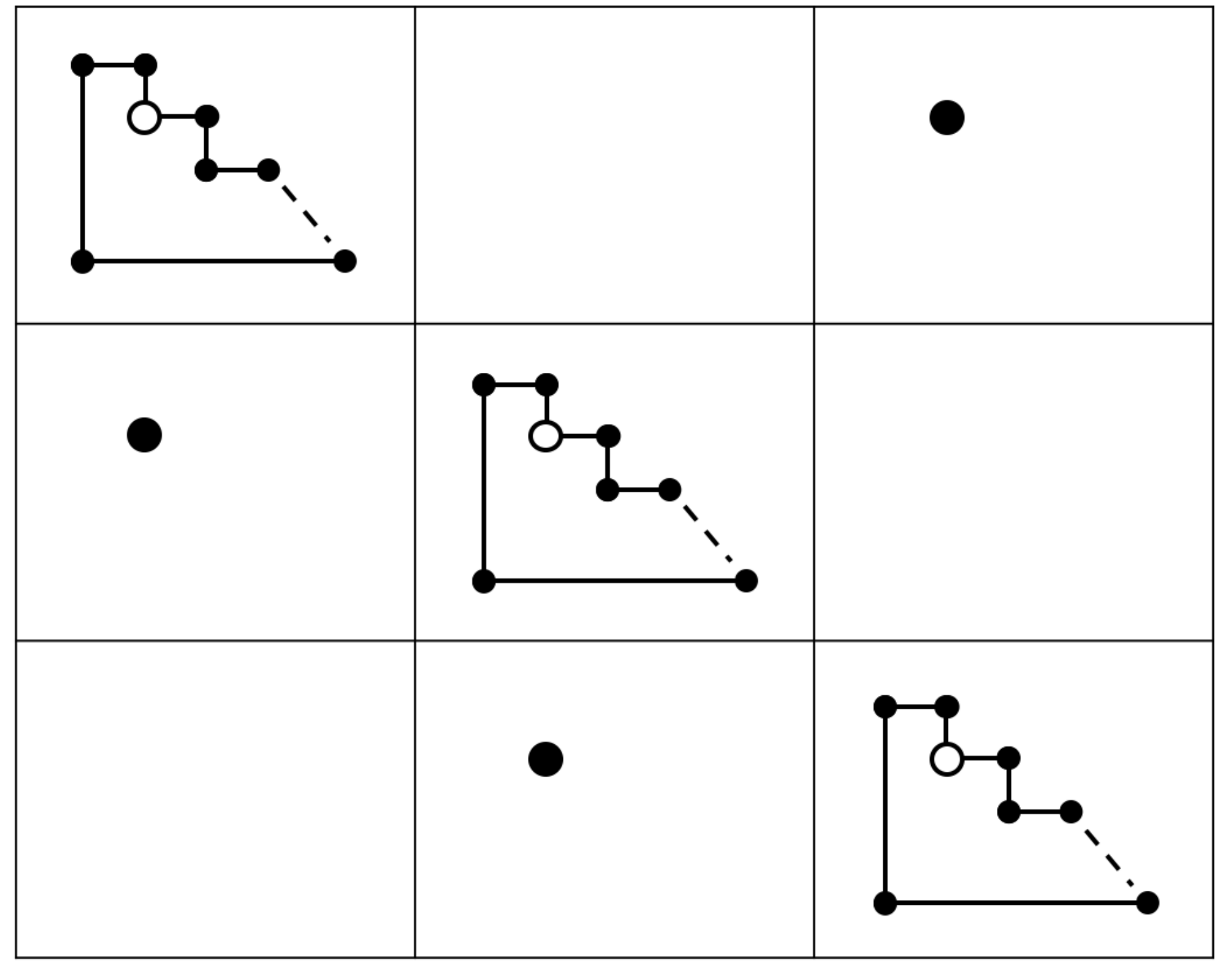}&\includegraphics[height=0.16\textwidth]{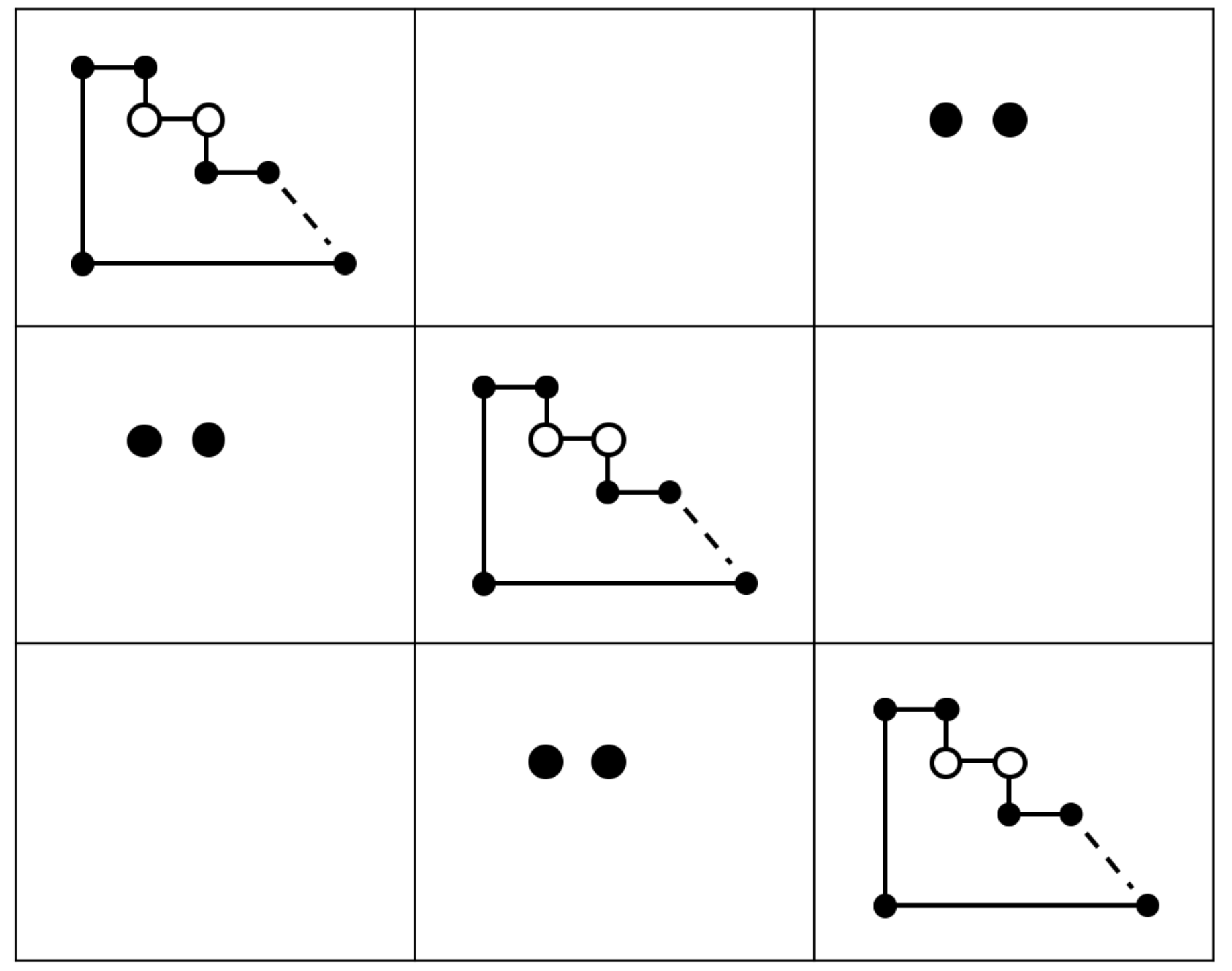}&\includegraphics[height=0.16\textwidth]{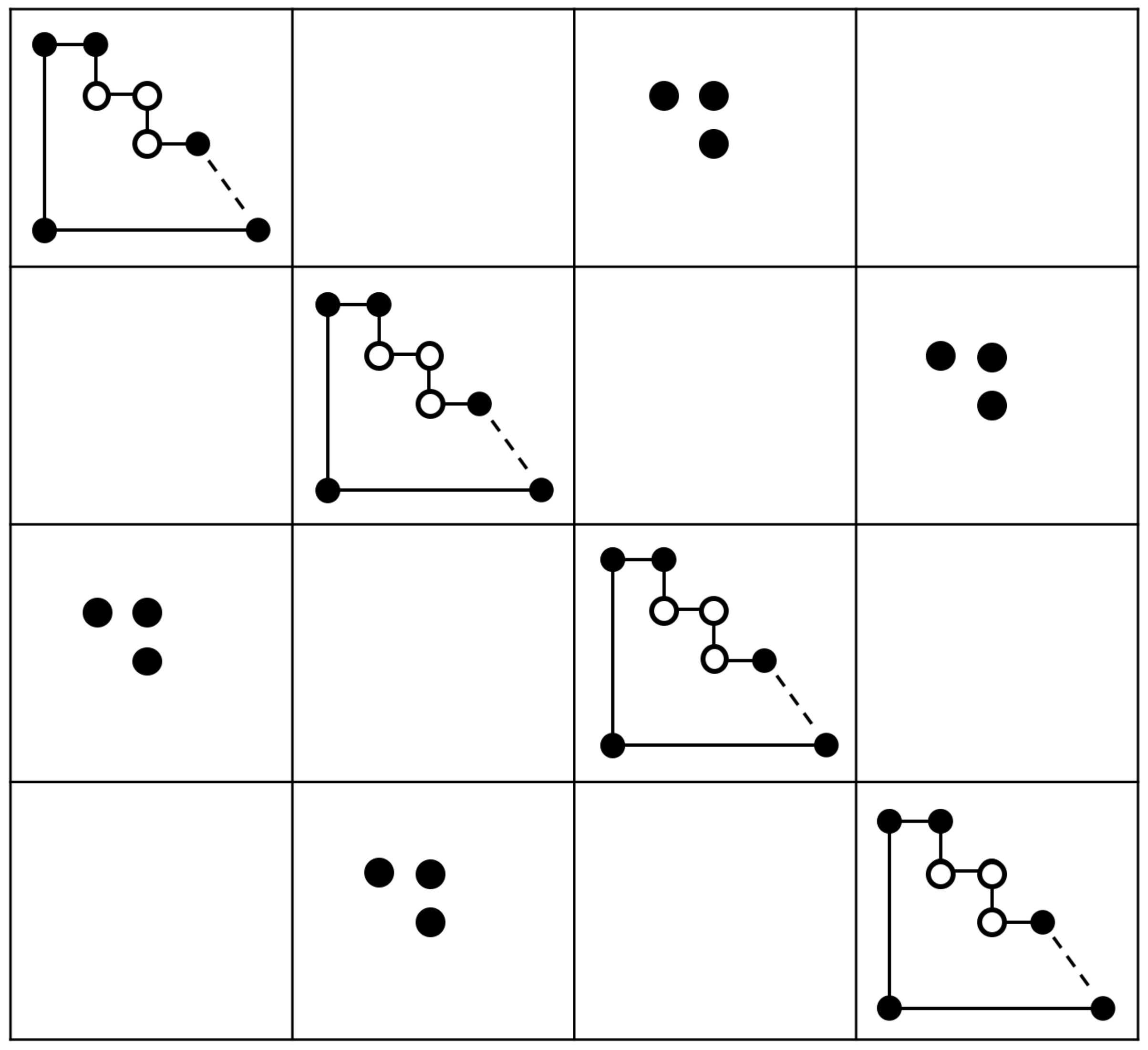}\vspace{0cm}\\
(a)&(c)&(e)\vspace{0cm}\\
\includegraphics[height=0.16\textwidth]{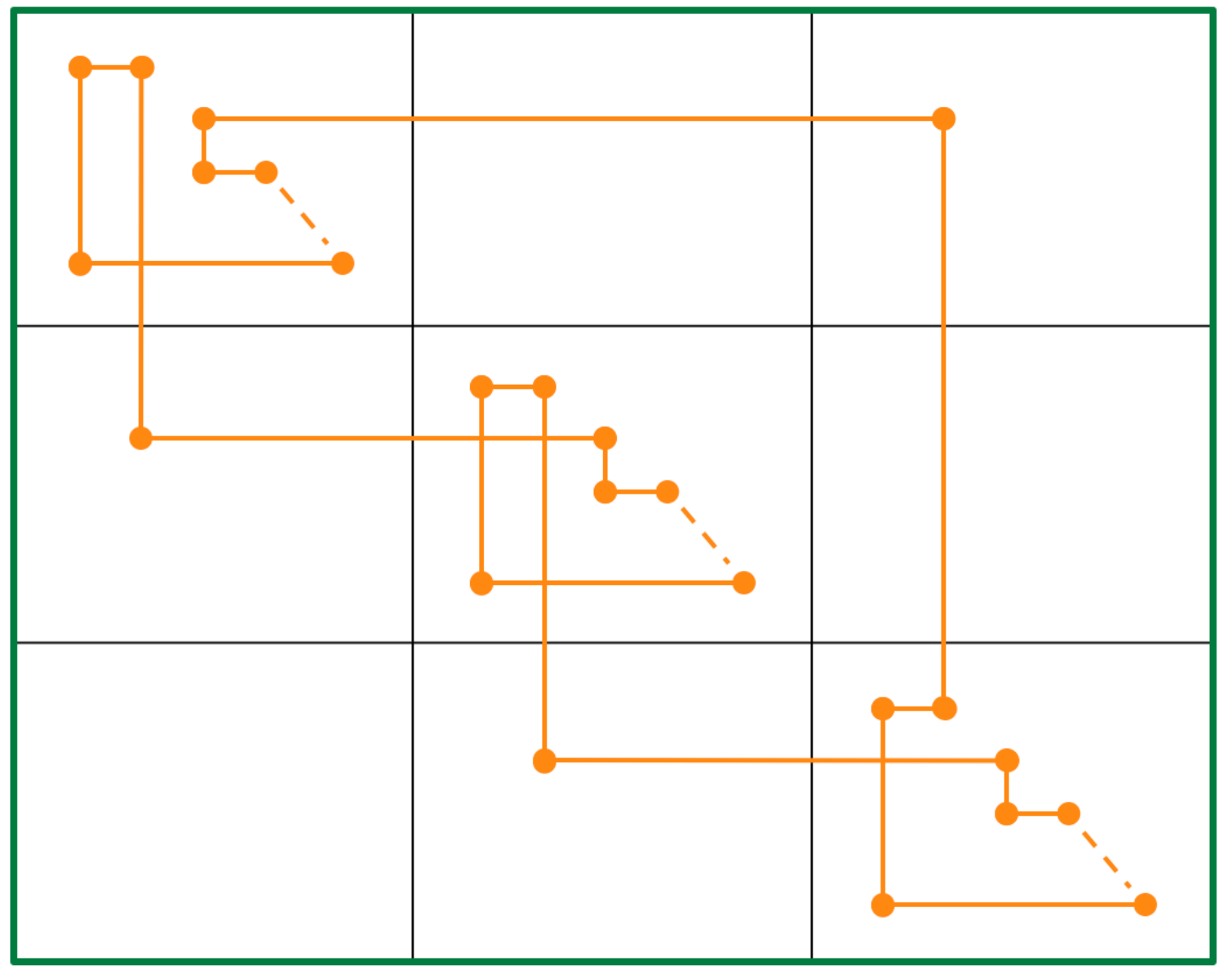}&\includegraphics[height=0.16\textwidth]{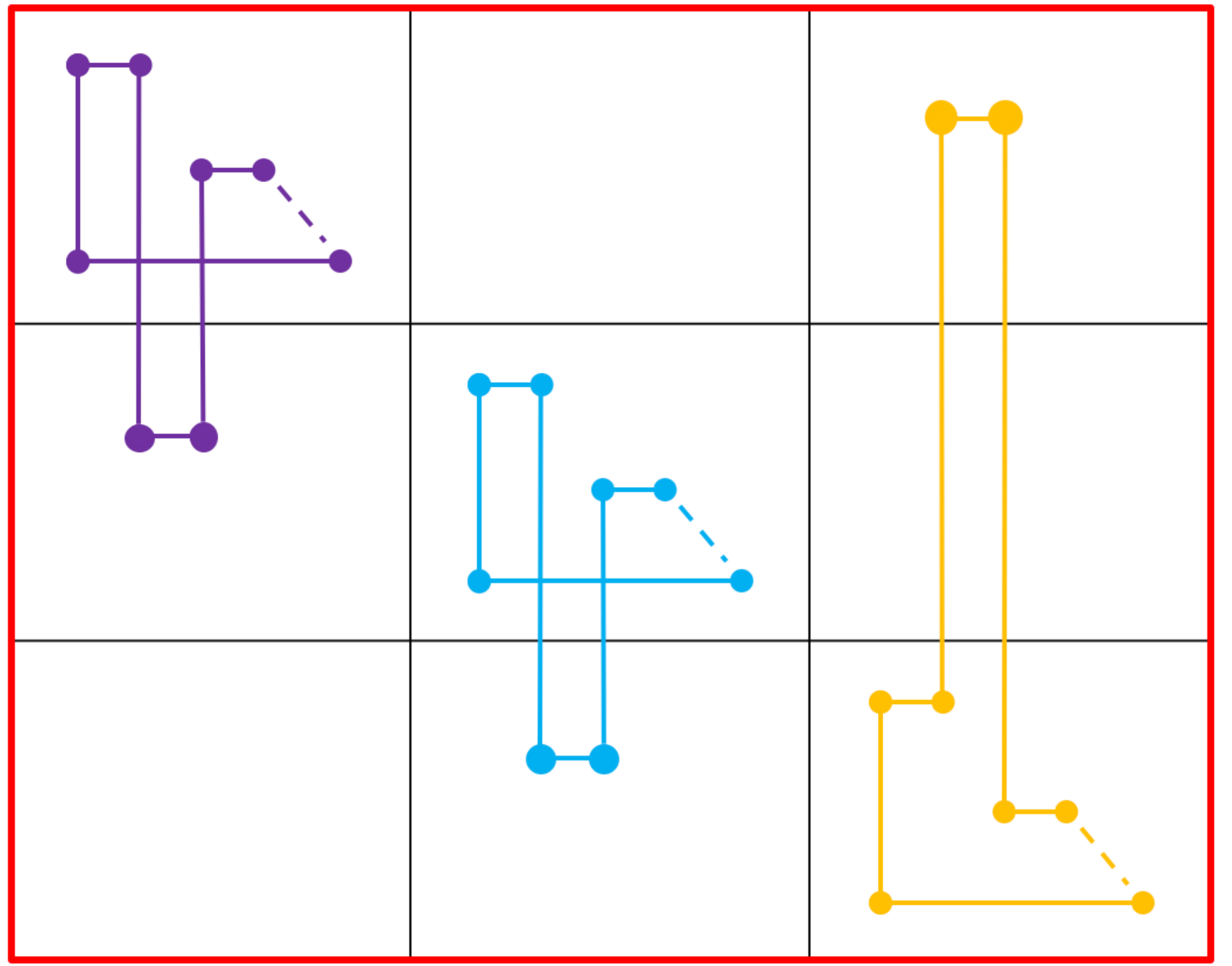}&\includegraphics[height=0.16\textwidth]{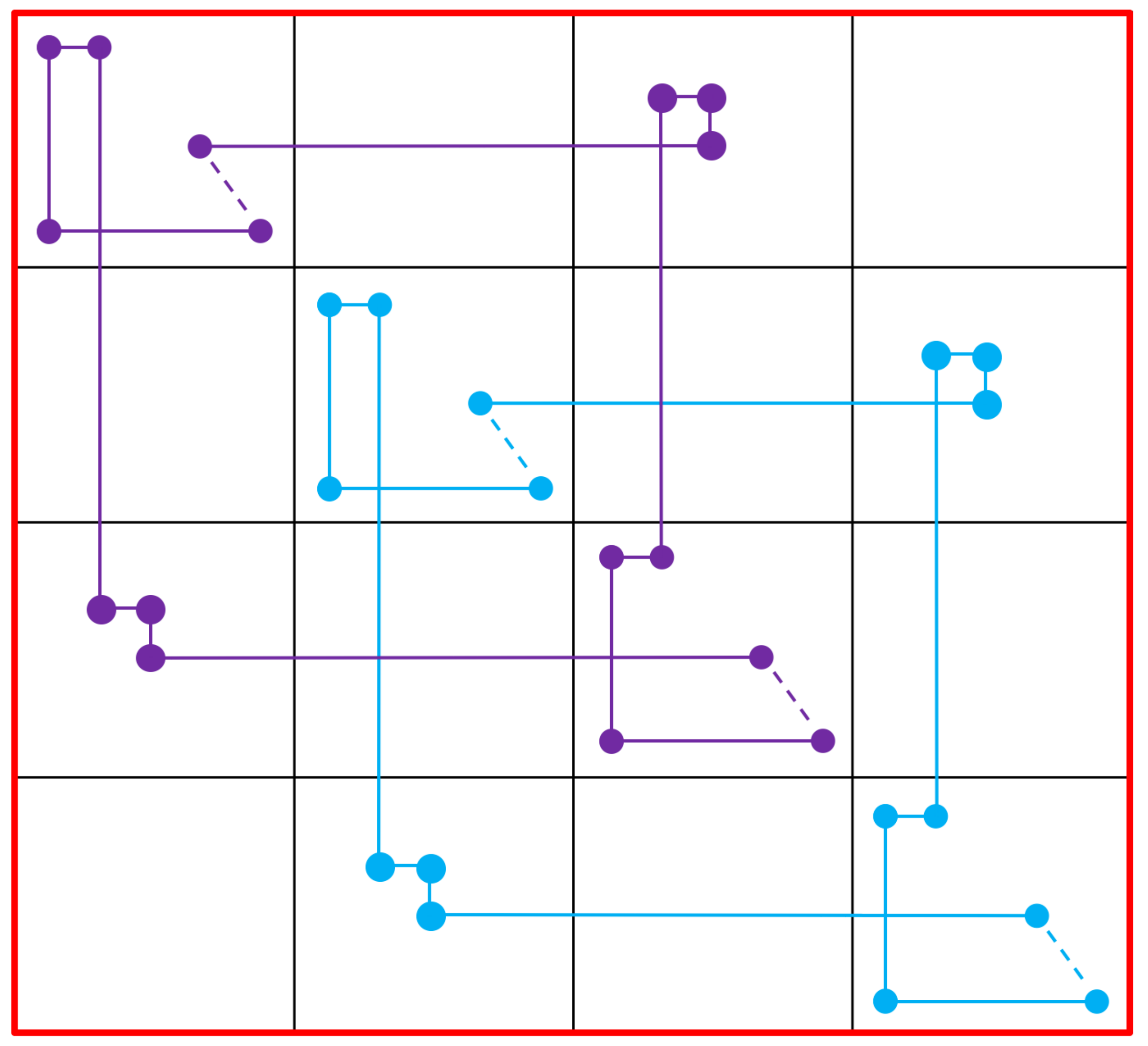}\vspace{-0cm}\\
(b)&(d)&(f)\vspace{-0cm}
\end{tabular}
\caption{(a) $\mathcal{C}_{i_a,j_a}{\rightarrow}\mathbf{A}_1$. The white circles show original locations of the relocated circulant. (b) A cycle-$3k$ is formed. (c) $\{\mathcal{C}_{i_a,j_a},\mathcal{C}_{i_b,j_b}\}{\rightarrow}\mathbf{A}_1$. (d) Three cycles-$k$ are formed. (e) $\{\mathcal{C}_{i_a,j_a},\mathcal{C}_{i_b,j_b},\mathcal{C}_{i_c,j_c}\}{\rightarrow}\mathbf{A}_2$. (f) Two cycles-$2k$ are formed.\vspace{0cm}}
\label{relocation_examples}
\end{figure}
\end{example}

\vspace{-0cm}
\begin{remark}
A circulant can appear more than once in $C_{\mathcal{O}_k}$, e.g., see Fig.~\ref{cyclesexamples}(b). A circulant that is repeated $r$ times in the sequence can be interpreted in our analysis as $r$ different circulants; every  two circulants from this group have an even distance on $\mathcal{O}_k$. The relocation of a circulant that appears $r$ times {is equivalent to the relocation of $r$ circulants 
to the same auxiliary matrix.\vspace{-0cm}}
\end{remark}

\subsection{Score Voting Algorithm for MD-SC Code Design\vspace{-0cm}}
Our framework is based on a score voting policy and aims at minimizing the population of short cycles. As stated in Section~III, the MD coupling with depth $d$ is performed via relocating problematic circulants to auxiliary matrices $\mathbf{A}_t$, $t\in\{1,\cdots,d-1\}$. After relocating one circulant, the ranking of the problematic circulants (with respect to the number of cycles each of them is visited by) changes. Thus, the relocations are performed sequentially. {In our framework, we  use a tree-based strategy for constructing MD-SC codes, by identifying a proper sequence of relocations such that as many as possible designated cycles are removed in the constituent SC codes, while as few as possible short cycles are formed in the multi-dimensional configuration. To assign scores to the branches of the tree, we use the results of Section IV~(A). A tree-based strategy has also been recently applied to find a good partitioning to construct 1D-SC codes with a reduced population of problematic objects \cite{BattaglioniArxiv2019}. 

Consider a targeted circulant $\mathcal{C}_{i_v,j_v}$. There are $d$ possible relocation  options for this circulant: relocate to one of the $(d-1)$ auxiliary matrices or keep in $\mathbf{H}_\textnormal{SC}'$, i.e., $M(\mathcal{C}_{i_v,j_v})=t$ and $t\in\{0,1,\cdots d-1\}$. Each cycle $\mathcal{O}_k$ in $\mathbf{H}_\textnormal{SC}$ that has the targeted circulant in its sequence gives a score for each of these options, and the collective scoring results are considered for making a decision. The score $R(\mathcal{O}_k,t)$ is proportional to the length of the cycles that the non-zero instances of the circulants of $C_{\mathcal{O}_k}$ form after applying the corresponding option (after performing a relocation or keeping the targeted circulant in $\mathbf{H}_\textnormal{SC}'$):\vspace{-0cm}
\begin{equation}
\label{score_equ}
R(\mathcal{O}_k,t)=\frac{L_2}{\gcd(L_2,\Delta_{\mathcal{O}_k})},
\hspace{0.6cm}
\Delta_{\mathcal{O}_k}=((-1)^{v+1}rt-\sum_{\mathcal{C}_{i_u,j_u}\in C_{\mathcal{O}_k}\setminus \mathcal{C}_{i_v,j_v}}[(-1)^{u}M(\mathcal{C}_{i_u,j_u})])_{\substack{\scriptstyle{L_2}}}.\vspace{-0.0cm}
\end{equation}
Here, we assumed $\mathcal{C}_{i_v,j_v}$ is repeated $r$ times in $C_{\mathcal{O}_k}$, and $v$ is the index of one of the repetitions.

In fact, there might be several options for a targeted circulant such that IRC (i.e., (\ref{mainIneffReloc})) does not hold. However, the options that result in larger cycles (with smaller cardinality as a consequence) are preferable. We use a scoring system in our algorithm for constructing MD-SC codes in order to convert short cycles in the constituent SC codes into cycles with lengths as large as possible.\vspace{-0cm}

\begin{example}
Consider the cycle $\mathcal{O}_k$ and a targeted circulant $\mathcal{C}_{i_v,j_v}\in C_{\mathcal{O}_k}$.\\
\underline{Scenario 1:} No circulants of $\mathcal{O}_k$ are previously relocated, and $\mathcal{C}_{i_v,j_v}$ appears once in $C_{\mathcal{O}_k}$ (i.e., $r=1$). Thus, IRC does not hold after a relocation, regardless of the auxiliary matrix that $\mathcal{C}_{i_v,j_v}$ is relocated to. For the option $M(\mathcal{C}_{i_v,j_v})=t$, 
$R(\mathcal{O}_k,t)=L_2/\gcd(L_2,t)$.
For instance, $\mathcal{O}_k$ gives score $1$ to the option ``keep in $\mathbf{H}_\textnormal{SC}'$'', and gives score $L_2$ to the option ``relocate to $\mathbf{A}_1$''.\\
\underline{Scenario 2:} Circulant $\mathcal{C}_{i_w,j_w}\in C_{\mathcal{O}_k}$ is already relocated to $\mathbf{A}_1$, $D_{\mathcal{O}_k}(\mathcal{C}_{i_v,j_v},\mathcal{C}_{i_w,j_w})=2$, and both circulants appear once in $C_{\mathcal{O}_k}$ (i.e., $r=1$). Then, IRC does not hold for options ``no relocation" and ``relocation to $\mathbf{A}_t$", when $t\neq{L_2-1}$. In fact, for the option $M(\mathcal{C}_{i_u,j_u})=t$, $t\in\{0,\cdots,d-1\}$, $\mathcal{O}_k$ gives score 
$R(\mathcal{O}_k,t)=L_2/\gcd(L_2,t+1).$
For instance, $\mathcal{O}_k$ gives score $1$ {blue}{to} ``relocate to $\mathbf{A}_{L_2-1}$'', and gives score $L_2$ to ``keep in $\mathbf{H}_\textnormal{SC}'$'' and ``relocate to $\mathbf{A}_{t'}$'' where  $(t'+1)$ and $L_2$ are relatively prime.\vspace{-0cm}
\end{example}

The relocation options are $\{$relocate to $\mathbf{A}_1$,\dots, relocate to $\mathbf{A}_{d-1}$, keep in $\mathbf{H}_\textnormal{SC}'\}$. We identify the best options for a targeted circulant as follows:
We first identify and keep the options that receive the least number of scores with value $x=1$, as these options result in fewer cycle-$k$ in the MD-SC code. Among these options, we keep the ones that that receive the least number of scores with value $x=2$, as these options result in fewer cycle-$2k$ in the MD-SC code. We continue until we reach $x=\lfloor L_2/2\rfloor$ or there is only one option left for the targeted circulant. Then, all survived options are recorded as branches of a tree, and the next targeted circulant is chosen and similarly evaluated for each branch.\vspace{-0cm}

\begin{remark}
The score value is by definition a divisor of $L_2$. Thus, $x$ is only considered for the above analysis if $L_2\overset{x}{=}0$.
Moreover, we do not continue the procedure until reaching $x=L_2$. This is because two options that receive the same number of scores with value $x$, $x\in\{x|x\in\{1,2,\dots,L_2/2\}, L_2\overset{x}{=}0\}$, receive the same number of scores with value $x=L_2$.\vspace{-0cm}
\end{remark}

 Algorithm~\ref{algo1} shows the procedure to find the best relocation options. 
We consider all cycles-$k$ in $\mathbf{H}_\textnormal{SC}$ that visit circulants in the middle replica. We call the cycles for which IRC holds the active cycles and the rest as the inactive cycles.
 We highlight three points here: (1) The targeted circulant $\mathcal{C}_{i_v,j_v}$ is chosen from $\{\mathcal{C}_{i,j}|\mathcal{C}_{i,j}\neq\mathbf{0}\text{ and }M(\mathcal{C}_{i,j})=0\}$ to increase the MD coupling. (2) The most problematic circulant is the one that is visited by the most active cycles. (3) Each active/inactive cycle that visits $\mathcal{C}_{i_v,j_v}$ (has $\mathcal{C}_{i_v,j_v}$ in its sequence) gives a score to each relocation option, since the status of cycles-$k$ (being active or inactive) changes by relocations.\vspace{-0cm}

\setlength{\textfloatsep}{10pt}
\begin{algorithm}[t]
\caption{{Score Voting Algorithm for Relocation}}\label{algo1}
\begin{algorithmic}[1]
\State \textbf{inputs:} targeted circulant $\mathcal{C}_{i_v,j_v}$, $k$, $[M(\mathcal{C}_{i,j})]$, $d$, and $L_2$.
\State Find $\Psi$, the set of all active/inactive cycles-$k$ that have $\mathcal{C}_{i_v,j_v}$ in their sequences.
\For{\text{each $\mathcal{O}_k\in\Psi$}}
	\For{$t \gets 0$ to $d-1$}
		\State $M(\mathcal{C}_{i_v,j_v})=t$.
		\State  $\Delta_{\mathcal{O}_k} =(-\sum_{u=1}^{k}[(-1)^{u}M(\mathcal{C}_{i_u,j_u})])_{\substack{\scriptstyle{L_2}}}$.
		\State $R(\mathcal{O}_k,t)=L_2/\gcd(L_2,\Delta_{\mathcal{O}_k})$.
	\EndFor
\EndFor
\State $\Phi=\{0,\cdots,d-1\}$.
\For{$x \gets 1$ to $\lfloor L_2/2\rfloor$}
	\If{$L_2\overset{x}{=}0$}
	\State $\Phi\gets\argmin_{t\in\Phi} |\{\mathcal{O}_k|\mathcal{O}_k\in\Psi,R(\mathcal{O}_k,t)=x\}|$.
	\EndIf
\EndFor
\State \textbf{output:} relocation options $\Phi$.
\end{algorithmic}
\end{algorithm}

Now, we are ready to describe our algorithm for constructing MD-SC codes. A solution for constructing an MD-SC code is a sequence of up to $\mathcal{T}$ relocations. Our algorithm for constructing MD-SC codes is greedy in the sense that, at each step, it chooses the relocation options that result in the least number of short cycles.
The solutions for constructing an MD-SC code are recorded in a tree structure. The root of the tree corresponds to the initial state, where $\mathbf{H}_\textnormal{SC}'=\mathbf{H}_\textnormal{SC}$ and $\mathbf{A}_t=\mathbf{0}$ for $t\in\{1,\cdots,L_2-1\}$. Other nodes correspond to one relocation each, and the path from the root node to a node at level $l$, $l\in\{1,\dots,\mathcal{T}\}$, describes a solution with $l$ relocations for constructing the MD-SC code.
At each iteration of our algorithm, we expand the tree by one level and trim the solutions that do not result in MD-SC codes with the best cycle properties amongst the solutions at that level.

\textbf{Expanding: }
At iteration $l$, we consider all nodes at level $l-1$, individually. For each node at level $l-1$, we perform the relocations described by the path from the root to the node, and form matrix $\mathbf{H}_\textnormal{SC}'$ and the auxiliary matrices, accordingly. Next, all non-zero circulants in the middle replica of $\mathbf{H}_\textnormal{SC}'$ are ranked, in a decreasing order, based on the number of active cycles-$k$ that they are visited by. Then, we target one circulant from the top of the list and find its best relocation options, by Algorithm~1. If the option ``keep in $\mathbf{H}_\textnormal{SC}'$'' is among the best options, the next problematic circulant in the sorted list is targeted. We continue this process until the most problematic circulant, such that its relocation reduces the population of short cycles, is found. Then, its best relocation options are added as children of the current node. If there is no circulant in the list such that its relocation reduces the population of short cycles, the node is not expanded.\vspace{-0cm}

\textbf{Trimming: } 
At the end of each iteration, all solutions (there is one solution per leaf node) that do not result in MD-SC codes with the least number of active cycles are trimmed. If all children of a node are trimmed, that node is trimmed as well.\vspace{-0cm}

\textbf{Termination: }
We proceed with expanding and trimming the tree of solutions, until no node is expanded in an iteration (the relocation process does not help anymore) or the  maximum density is achieved (it happens at the end of iteration $\mathcal{T}$). Then, we construct the MD-SC code according to the relocations suggested by the nodes on the path from the root to a randomly chosen, non-trimmed, leaf.\vspace{-0cm}

\begin{example}
\begin{figure}
\centering
\includegraphics[width=0.68\textwidth]{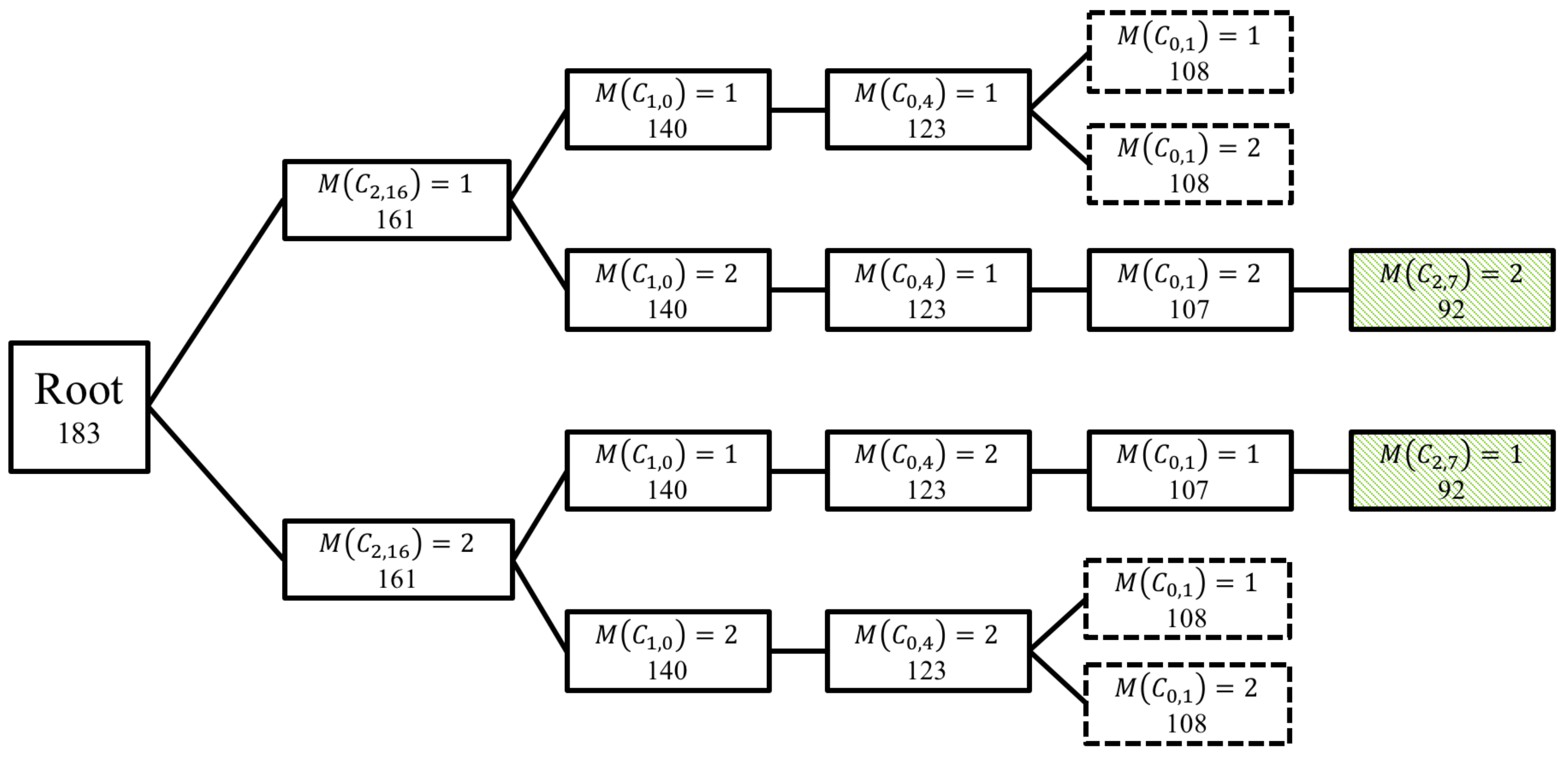}
\vspace{-0cm}
\caption{An illustration for a tree of solutions. The information associated with each node are the relocation option and the number of cycles-$6$ for the solution described by the path from the root up to this node.
The nodes with dashed borders show the trimmed solutions. 
The nodes with hatch background show the winning solutions.\vspace{-0cm}}
\label{sampletree}
\end{figure}
Fig.~\ref{sampletree} illustrates an example for the tree of solutions to construct an MD-SC code with parameters $L_2=3$, $d=3$, and $\mathcal{T}=5$ \footnote{The remaining code parameters that result in this realization are $\kappa=z=17$, $\gamma=4$, $m=1$, $L=10$, girth $6$, and OO-CPO technique is used for constructing the constituent SC codes.}. At iteration $1$, there are two winning relocation options for the targeted circulant, and they both result in $161$ cycles-$6$. At iteration $2$, each node at level $1$ is expanded to two nodes. All $4$ solutions result in $140$ active cycles-$6$. At iteration $3$, each node at level $2$ is expanded to one node. All $4$ solutions result in $123$ active cycles-$6$. At iteration $4$, the $1^\text{st}$ and $4^\text{th}$ nodes at level $3$ are expanded to two nodes each, and the $2^\text{nd}$ and $3^\text{rd}$ nodes at level $3$ are expanded to one node each. Among the $6$ solutions, two of them result in $107$ active cycles-$6$, and the remaining result in $108$ active cycles-$6$ and are trimmed. At iteration $5$ (the last iteration), each (non-trimmed) node at level $4$ is expanded to one node. The two solutions (shown with leaves that have hatch backgrounds) result in $92$ active cycles-$6$, and one of them can be chosen randomly for constructing the MD-SC code.\vspace{-0cm}
\end{example}

Algorithm~2 shows the procedure for constructing MD-SC codes.\vspace{-0cm}

\setlength{\textfloatsep}{10pt}
\begin{algorithm}
\caption{Algorithm for Constructing MD-SC Codes}\label{algo2}
\begin{algorithmic}[1]
\State \textbf{inputs:} $\mathbf{H}_\textnormal{SC}$, $k$, $L_2$, $d$, and $\mathcal{T}$.
\State \textbf{initialize:} A tree with one node (\textit{root}), $l=1$.
\State Find $\Gamma$, i.e., the set of all cycles-$k$ in $\mathbf{H}_\textnormal{SC}$ that visit the circulants in the middle replica of $\mathbf{H}_\textnormal{SC}$.
\While{$l\leq\mathcal{T}$ \textbf{and} there are nodes at level $l-1$}
\For{each \textit{node} at level $l-1$ }
\State Set $[M(\mathcal{C}_{i,j})]$ according to the relocations suggested by the path from \textit{root} to \textit{node}
\State Find status (active/inactive) of cycles-$k$ in $\Gamma$ using IRC or (\ref{mainIneffReloc}).
\State $S=\{\mathcal{C}_{i,j}|\mathcal{C}_{i,j}\in\mathbf{R}_{\lceil L/2 \rceil}\text{ and }M(C_{i,j})=0\}$.
\State Sort $S$ in a decreasing order according to the number of times they are visited by active cycles in $\Gamma$.
\State Flag$=0$.
\While{$|S|>0$ \textbf{and} Flag$=0$}
\State Select the first circulant $\mathcal{C}_{i_v,j_v}$ in $S$ for relocation.
\State Find best relocation options $\Phi$ for $\mathcal{C}_{i_v,j_v}$ by Algorithm~\ref{algo1}.
\If{$0\in\Phi$}
$S=S\setminus\mathcal{C}_{i_v,j_v}$
\Else
\State Flag$=1$.
\For{$\forall t\in\Phi$}
\State Add a child to \textit{node} with content $M(C_{i_v,j_v})=t$.
\EndFor
\EndIf
\EndWhile
\EndFor
\State Count the number of active cycles for each solution suggested by the nodes at level $l$.
\State Trim all leaves (and their parents if needed) that do not result in minimum active cycles-$k$.
\State $l=l+1$.
\EndWhile
\State Pick a random solution, set $M(C_{i,j})$ accordingly, and construct $\mathbf{H}_\textnormal{SC}^\textnormal{MD}$ using (\ref{MD_structure}).
\State \textbf{output:} $\mathbf{H}_\textnormal{SC}^\textnormal{MD}$.
\end{algorithmic}
\end{algorithm}

\vspace{-0.0cm}
\newpage
\section{Low-Latency Decoding of MD-SC Codes\vspace{-0cm}}

In this section, we present a low-latency windowed decoding for MD-SC codes. First, we describe the decoding method. Then, we provide the latency analysis of our decoder.\vspace{-0cm}

\subsection{Multi-Dimensional Windowed Decoding\vspace{-0cm}}

In this subsection, we present a low-latency windowed decoder for our MD-SC codes. Our decoder extends the well-studied windowed decoder of the conventional SC codes,  \cite{IyengarIT2012,IyengarIT2013}, to allow for low-latency decoding across multiple constituent SC codes. {Such a decoder was briefly introduced in \cite{SchmalenISTC2014}. In our paper, we thoroughly define and analyze the multi-dimensional windowed decoder which, to our knowledge, has not been done before.}

First, we recall the original windowed decoder. By construction, for a 1D-SC code, any two VNs are guaranteed not to share CNs if the replicas that they belong to are separated by more than $m$ replicas in between. As such, the farther apart two VNs are, the less likely they are to affect the decoding results of each other. This observation motivates the idea of a windowed decoder where only a subset, or a window, of VNs and CNs are considered for decoding the VNs of a replica. 

Let $W$ be the size of the window and $l$ be the window index, where $m+1 \leq W \leq L$ and $1 \leq l \leq L$. Consider the $l^\text{th}$ window. This window only considers the edges between VNs and CNs that exclusively belong to replicas $\{\mathbf{R}_{l},\mathbf{R}_{l+1} \dots,\mathbf{R}_{\min(l+W-1,L)}\}$, known as the \textit{window configuration}. We assume that replicas $\{\mathbf{R}_{1},\mathbf{R}_{2} \dots,\mathbf{R}_{l-1}\}$ have already been decoded, and their decoded values contribute to the decoding of VNs in later windows. The $l^\text{th}$ window performs decoding on its own window configuration and aims to decode the VNs in replica $\mathbf{R}_{l}$, known as the \textit{targeted VNs}. This operation is performed sequentially from the first to the $L^\text{th}$ window until the VNs of all replicas are decoded \cite{IyengarIT2012}.

We extend the idea of low-latency windowed decoding to make it applicable for decoding of the MD-SC codes. Note that each segment of an MD-SC code has the staircase structure of an SC code, see Definition 1.2. Therefore, if two VNs do not to share CNs before MD coupling, they also do not share CNs after MD coupling. Moreover, if two VNs do not to share CNs within one constituent SC code before MD coupling, any instance of these two VNs across different constituent SC codes also do not share CNs after MD coupling.

We define an MD window as a collection of several smaller (local) windows that are each defined over one segment of $\mathbf{H}_\textnormal{SC}^\textnormal{MD}$. Let $W_D$ be the size of the local windows and $l_D$ be the MD window index, where $m+1 \leq W_D \leq L$ and $1 \leq l_D \leq L$. Let $\mathbf{R}_{k}\textnormal{@}\mathbf{S}_{a,b}$ refer to the $k^\text{th}$ replica in segment $\mathbf{S}_{a,b}$ of $\mathbf{H}_\textnormal{SC}^\textnormal{MD}$. Recall that $\mathbf{S}_{(a+t)_{\substack{\scriptscriptstyle L_2}},a}=\mathbf{0}$ for $t\in\{d,\cdots,L_2-1\}$ and $a \in \{0,\dots,L_2-1\}$, which results in $\mathbf{R}_{k}$ being zero for these segments. Therefore, the local windows are only defined for the non-zero segments, and the number of local windows is $L_2d$. 

Consider the $l^{\text{th}}_D$ MD window. For this MD window, we define a local window $W^{l_D}_{a,b}$, where $0\leq a \leq L_2-1$ and $0 \leq b \leq d-1$, as the edges between VNs and CNs that exclusively belong to replicas $\{\mathbf{R}_{k}\textnormal{@}\mathbf{S}_{(a+b)_{L_2},a}, l_D \leq k \leq \min(l_D+W_D-1,L)\}.$
As such, the $l^{\text{th}}_D$ MD window is defined as the collection of local windows $W^{l_D}_{a,b}$, and we call it the \textit{multi-dimensional window configuration}.
We assume that the VNs corresponding to replicas $\{\mathbf{R}_{1},\mathbf{R}_{2} \dots,\mathbf{R}_{l_D-1}\}$ in all segments have already been decoded, and their decoded values contribute to the decoding of VNs in later MD windows. The $l_D^{\text{th}}$ window performs decoding on its own MD window configuration and aims to decode the VNs corresponding to replica $\mathbf{R}_{l_D}$ of all segments, known as the \textit{MD targeted VNs}. This operation is performed sequentially from the first to the $L^\text{th}$ MD window until all the VNs are decoded. For example in Fig.~\ref{window_fig}(a), the small rectangles represent an MD window configuration. The green columns are VNs that have already been decoded, and the blue columns are the targeted VNs.

One can view each constituent SC code as a chain of replicas. In the MD-SC code, constituent SC chains can only be connected together through their similar replicas. MD windowed decoding exploits this limited connectivity to allow for lower decoding latency, Fig.~\ref{window_fig}(b).
The structure of our MD-SC codes allows for a simpler decoder implementation. For all MD window configurations, the graphs have the same structure (edge connectivity). Thus, the same, small decoder can be used for all MD windows  and the only change across MD windows is the initial values of the VNs. This is another advantage of our deterministic construction compared to the random constructions, e.g., \cite{OhashiISIT2013,SchmalenISTC2014}, where the MD window configurations vary. 

{We now briefly mention a viable variant of the MD windowed decoding that is an interesting direction for future research. To further reduce the complexity and latency, one can limit the number of constituent SC codes that are considered in an MD window configuration to be less than $L_2$. The major challenge with this decoder is that the degree distribution of the MD window can be very different from the global degree distribution, affecting the decoding threshold of the MD windows \cite{KudekarIT2011}. Since the MD window configurations depends on the selected relocations, the score voting algorithm would need to take into account the degree distribution change. This observation requires more analysis which is left as future work. }

\begin{figure}
\centering
\begin{tabular}{cc}
\includegraphics[scale=0.25]{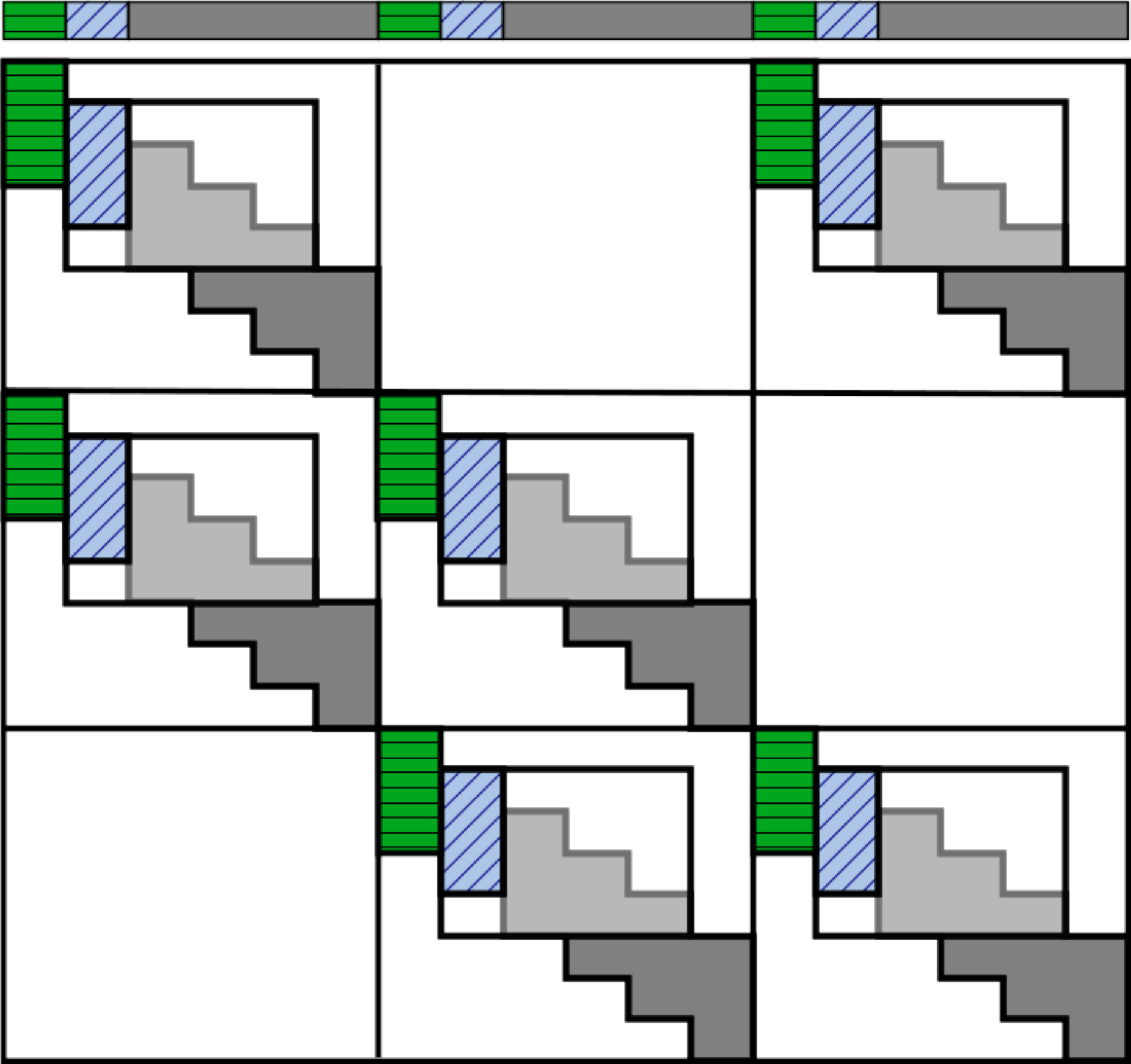}
&\hspace{1cm}
\includegraphics[scale=0.5]{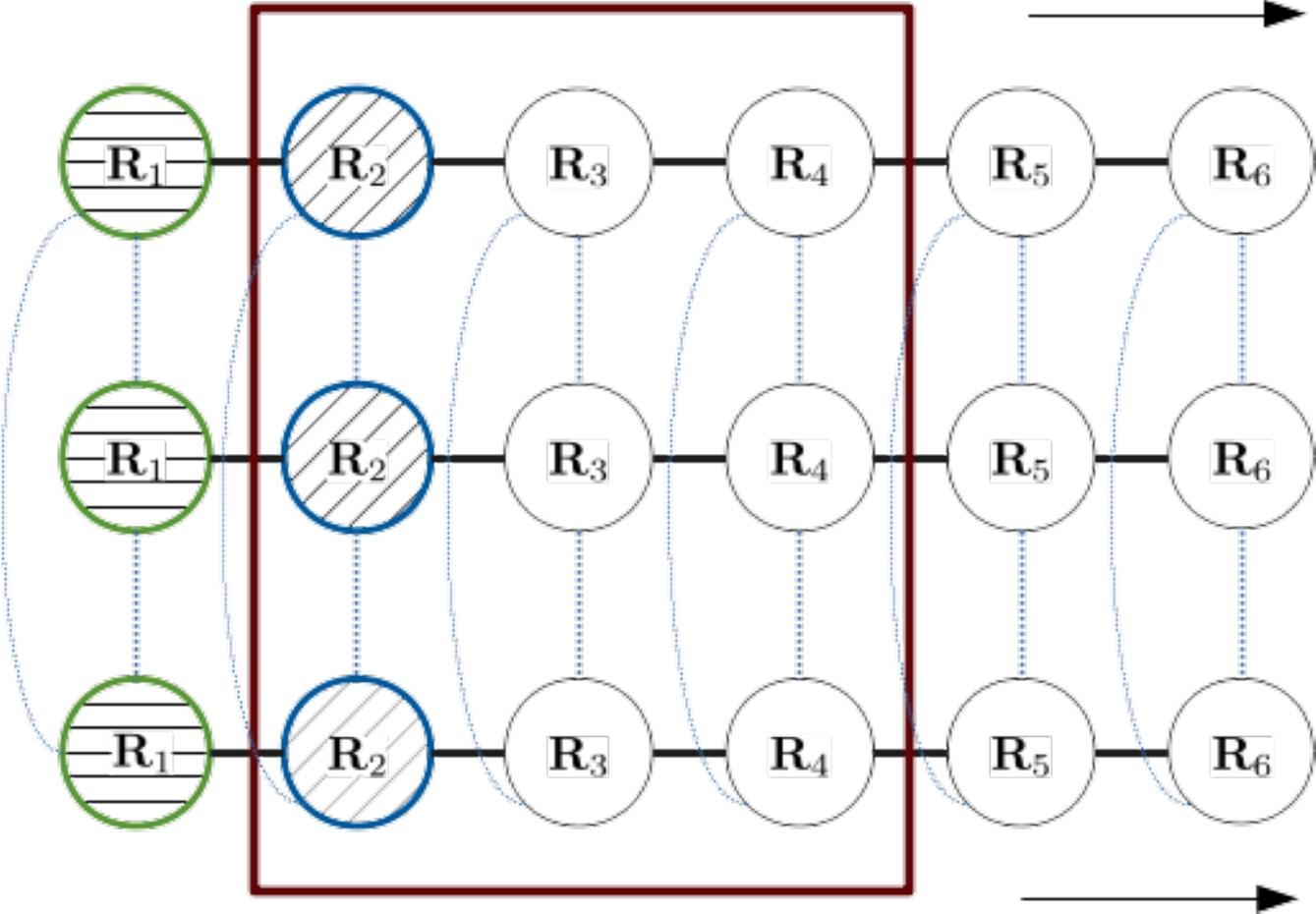}
\\
(a)&(b)\vspace{-0cm}
\end{tabular}
\caption{In both figures, the color green with horizontal lines represents the decoded VNs and the color blue with diagonal lines represents the MD targeted VNs: (a) MD window configuration for an MD-SC code with parameters $L=6$, $m=2$, $L_2=3$, $d=2$, $W_D=4$, and $l_D = 2$. (b) Each circle represents a replica and each horizontal chain is a constituent SC code. The rectangle shows the MD window configuration.
\vspace{-0cm}}
\label{window_fig}
\end{figure}
\subsection{Latency Analysis\vspace{-0cm}}

In this subsection, we provide a latency analysis of our MD windowed decoder \footnote{The presented latency analysis is inspired by the analysis in \cite{IyengarIT2012} performed for the one-dimensional windowed decoding.}. For decoding a group of VNs, we consider the decoding latency as the time the first VN is acquired until the whole VNs in that group are decoded, which is an upper bound for the latency of all VNs in the group. First, we consider the latency of a block decoder. The block decoder requires all the VNs to start decoding. Therefore, its decoding latency $T$ is given by $T= T_{rec} + T_{dec}$, where $ T_{rec}$ is the time needed to receive the full codeword and $T_{dec}$ is the time needed to decode the codeword.

We define the window latency as the time needed to decode the targeted VNs for a single MD window. This latency gives the time elapsed between successive decoding of the MD targeted VNs \footnote{This definition of latency is commonly used in the literature on windowed decoding, e.g., \cite{IyengarIT2012,IyengarIT2013}.}. We can define window latency as $T^{W_D}= T^{W_D}_{rec} + T^{W_D}_{dec}$, where $ T^{W_D}_{rec}$ is the time needed to receive the VNs in the MD window and $T^{W_D}_{dec}$ is the time needed to decode the targeted VNs. We can relate $T^{W_D}_{rec}$ and $T_{rec}$ by 
\begin{equation}
\label{latency1}
T^{W_D}_{rec} \leq \frac{(W_D+m)\kappa L_2 z}{L\kappa L_2 z}T_{rec} = \frac{(W_D+m)}{L}T_{rec},
\end{equation}
since all MD windows, except for a few trailing and leading ones, require $(W_D+m)\kappa L_2 z$ values for the VNs in their MD configurations before they can start decoding.
	
We assume that the number of iterations is the same for both the block decoder and the MD windowed decoder. For iterative decoding, the decoding time grows linearly with the number of VNs in consideration. Each MD configuration has $W_D\kappa L_2 z$ VNs, except for the last $L-W_D$ MD windows that have fewer VNs. Therefore, $T^{W_D}_{dec}$ and $T_{dec}$ are related by:
\begin{equation}
T^{W_D}_{dec} \leq \frac{W_D\kappa L_2z}{L \kappa L_2 z}T_{dec} = \frac{W_D}{L}T_{dec}
\label{latency2}.
\end{equation}
Using (\ref{latency1}) and (\ref{latency2}), $T^{W_D} \leq \frac{(W_D+m)}{L}T_{rec}+\frac{W_D}{L}T_{dec}\leq \frac{W_D+m}{L}T$. As expected, the latency reduction is similar to windowed decoding of 1D-SC codes, which shows that our MD-SC construction preserves the latency benefits of the 1D-SC codes.

\section{Simulation Results\vspace{-0cm}}

Our simulation results demonstrate the outstanding performance of our new framework for constructing MD-SC codes, and it is organized as follows: Subsections A and B are dedicated to the analysis of MD-SC codes with girth{blue}{s} $6$ and $8$, respectively. In each subsection,  we study the effect of parameters $\mathcal{T}$, $d$, and $L_2$ on the performance of MD-SC codes. Additionally, we compare the MD-SC codes constructed by our new framework with their 1D-SC counterparts (1D-SC codes having the same length and nearly the same rate as the MD-SC codes). In Subsection~C, we compare the performance of our well-designed MD-SC codes with random constructions. In Subsection~D, we evaluate the performance of the MD windowed decoding. In our simulations, we consider the AWGN channel, and we use quantized min-sum algorithm with $4$ bits and $15$ iterations for the decoding.\vspace{-0cm}

\subsection{Analysis for MD-SC Codes with Girth 6\vspace{-0cm}}

We first describe the code parameters of SC-Code~1 with girth $6$ that is used as the constituent SC code in the rest of this subsection. SC-Code~1 has parameters $\kappa=z=17$, $\gamma=4$, $m=1$, $L=10$, rate $0.74$, and length $2{,}890$ bits, and it is constructed by the OO-CPO technique \cite{EsfahanizadehTCOM2018}. The partitioning and circulant powers of SC-Code~1 are given in Appendix. The cycles of interest here have length 6, i.e., $k=6$. 

First, we consider MD-SC codes with $L_2=5$ constructed by Algorithm~2. 
Fig.~\ref{G6Sims}(a) shows the effect of increasing the MD coupling density, $\mathcal{T}$, on the population of cycles-$6$ for various MD coupling depths. The horizontal axis shows $\mathcal{T}$, and the vertical axis shows the number of active cycles-$6$. We remind that an active cycle-$k$ is a cycle-$k$ that visits circulants of the middle replica of the constituent SC code and  IRC (i.e., (\ref{mainIneffReloc})) holds for it. As we see, increasing $\mathcal{T}$ does not decrease the population of active cycles-$6$ after $18$ (resp. $23$) relocations for depth $2$ (resp., $5$), resulting in an earlier termination for the smaller depth.

\begin{figure}
    \centering
    \vspace{-0cm}
    \begin{tabular}{cc}
    \includegraphics[width=0.4\textwidth]{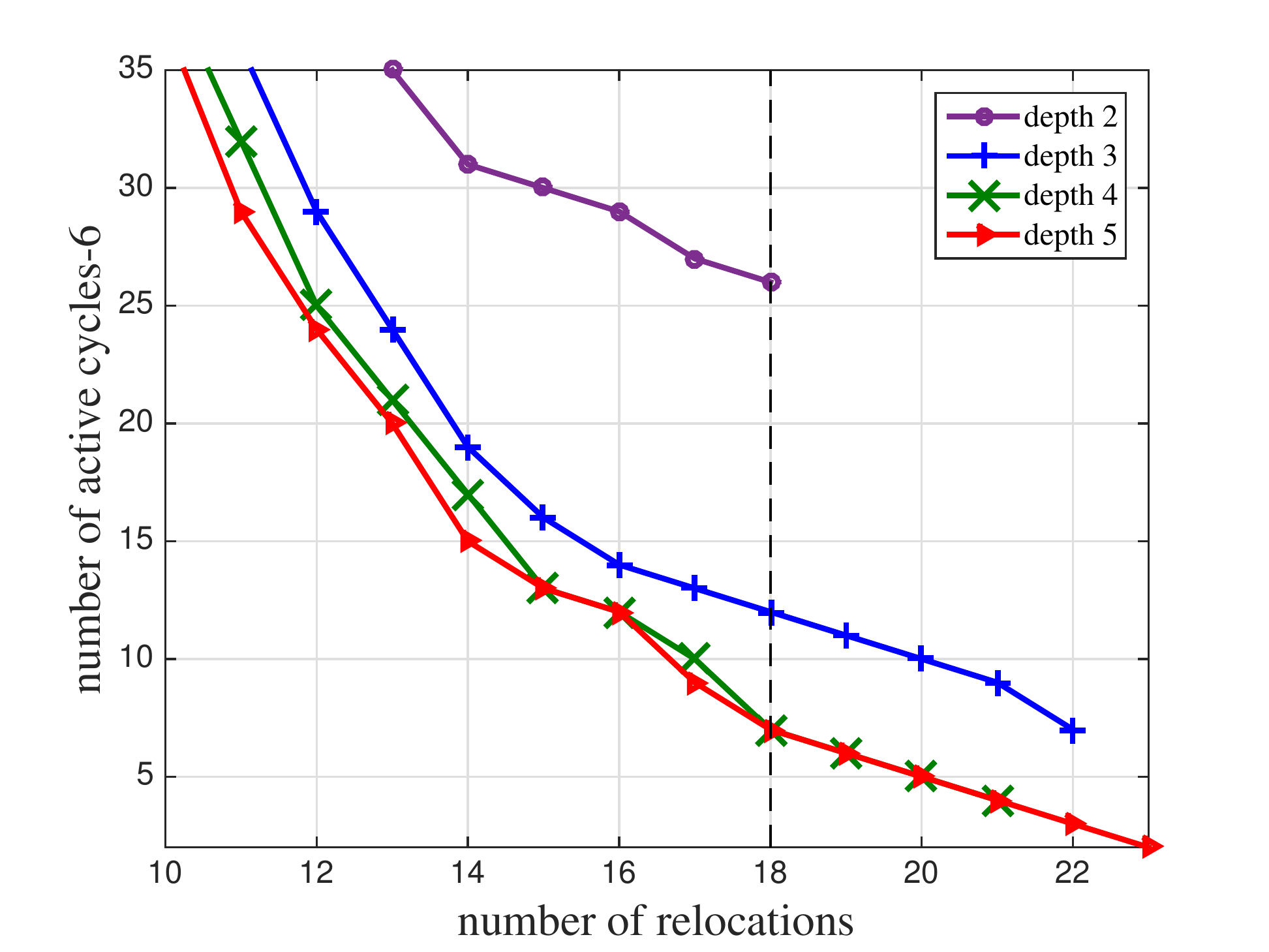}&
    \includegraphics[width=0.4\textwidth]{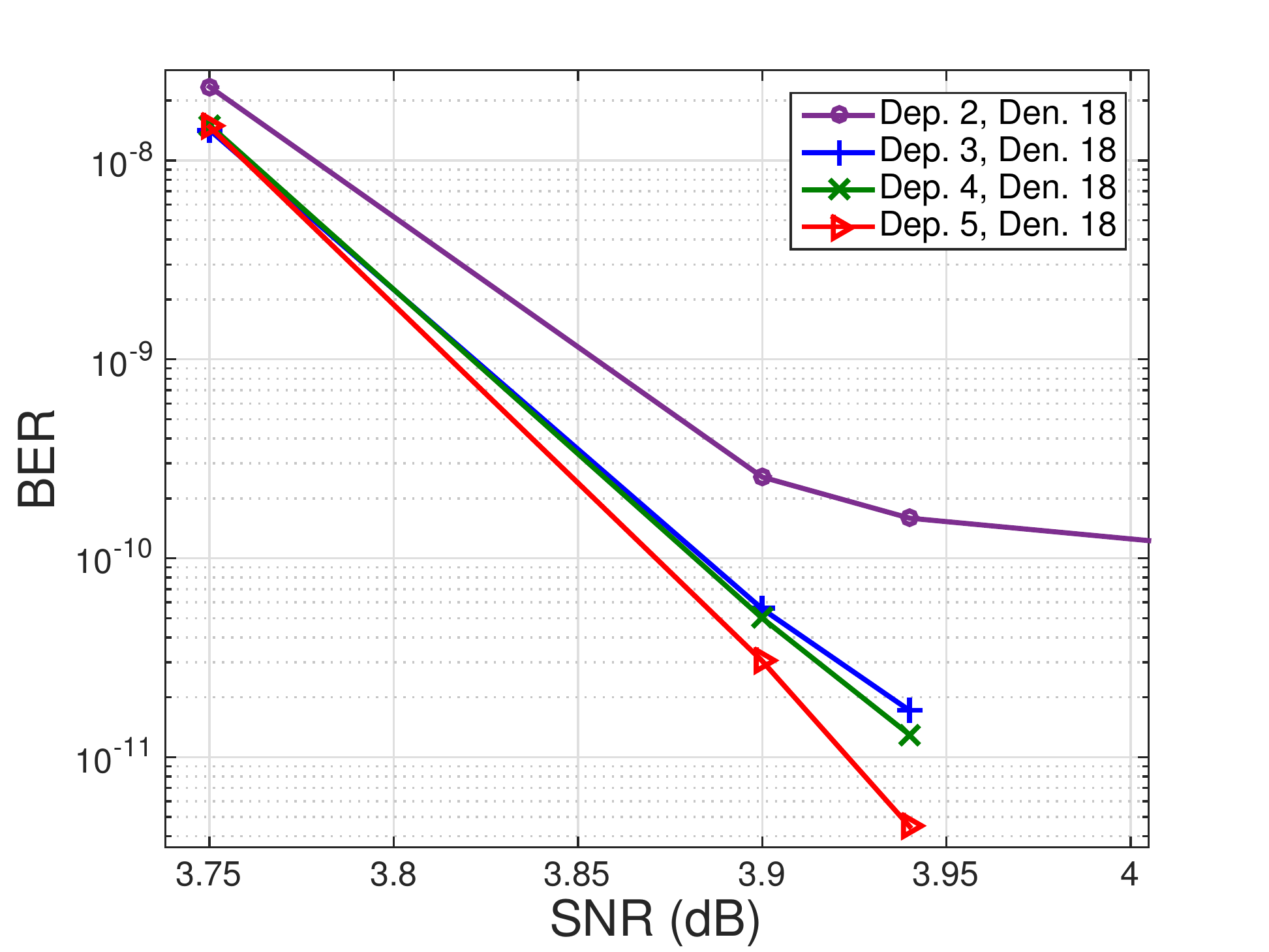}\vspace{-0cm}
    \\
     (a)&(b)\vspace{-0cm}
\end{tabular}
    \caption{MD-SC codes with SC-Code~1 as the constituent SC code and $L_2=5$: (a) The number of active cycles-$6$ for various densities and depths. Note that after a point, increasing $\mathcal{T}$ does not decrease the population of active cycles-$6$, e.g., after $18$ relocations for depth $2$, which results an early termination.
    (b) The BER performance at density $26.47\%$ and various depths.\vspace{-0cm}}
    \label{G6Sims}
\end{figure}

Table~I shows the number of cycles-$6$ for MD-SC codes with $L_2=5$, density $18$ ($26.47\%$ of circulants), and for various MD coupling depths. As we see, increasing the depth improves the cycle properties of the MD-SC codes. According to Table~I, MD-SC codes with depth{blue}{s} $4$ and $5$ have similar number of active cycles-$6$, and the small difference in the total number of cycles-$6$ is due to the different multiplicity of the active cycles-$6$ in the final MD-SC codes.
Fig.~\ref{G6Sims}(b) shows a similar comparison in terms of the BER performance. For example, at SNR$=3.94$~dB, the MD-SC code with depth $5$ shows more than $1.5$ orders of magnitude improvement in BER performance compared to MD-SC code with depth $2$.

\begin{table}
\centering
\caption{Number of cycles-$6$ for MD-SC codes with SC-Code~1 as constituent SC code, $L_2=5$, and density $26.47\%$.\vspace{-0cm}}
\begin{tabular}{|c|c|c|c|c|}
\hline
depth $d$&$2$&$3$&$4$&$5$\\
\hline
number of active cycles-$6$&$26$&$12$&$7$&$7$\\
\hline
total number of cycles-$6$&$20{,}825$&$9{,}775$&$5{,}695$&$5{,}610$\\
\hline
\end{tabular}\vspace{-0.0cm}
\end{table}

Next, we study the effect of increasing the MD coupling length, $L_2$, on the performance of MD-SC codes. We first describe the MD-SC codes and their 1D counterparts. MD-SC-Code~1 has $L_2=1$, and it is, in fact, one instance of SC-Code~1. MD-SC-Code~2 has $L_2=3$, $d=3$ (maximum depth), and $\mathcal{T}=23$ (maximum density). After reaching the maximum density, relocation does not decrease the population of the cycles of interest. SC-Code~2 is an SC code similar to SC-Code~1 but with $L=30$ (three times the coupling length of SC-Code~1); thus it has comparable length and rate to MD-SC-Code~2. MD-SC-Code~3 has $L_2=5$, $d=5$ (maximum depth), and $\mathcal{T}=23$ (maximum density). SC-Code~3 is an SC code similar to SC-Code~1 but with $L=50$; thus it has comparable length and rate to MD-SC-Code~3. The MD mapping matrices, i.e., $\mathbf{M}=[M(\mathcal{C}_{i,j})]$, for MD-SC-Codes~2 and 3 are shown below:\vspace{-0cm}

\footnotesize
\begin{equation*}
\mathbf{M}^2\hspace{-0.0cm}=\hspace{-0.0cm}\left[
\arraycolsep=3pt\def\arraystretch{1}
\begin{array}{ccccccccccccccccc}
0&1&2	&0&2&2&0&0&0&0&0&0&0&0&0&1&0\\	
1&0&0	&0&0&1&1&1&0&2&0&0&0&0&0&2&2\\	
2&0&0	&0&0&0&0&1&2&0&0&0&1&1&0&0&2\\	
0&0&0	&0&0&0&2&0&1&1&2&0&0&0&0&1&0
\end{array}\right],\hspace{0.6cm}
\mathbf{M}^3\hspace{-0.0cm}=\hspace{-0.0cm}\left[
\arraycolsep=3pt\def\arraystretch{1}
\begin{array}{ccccccccccccccccc}
0&3&3&0&2&2&0&0&0&1&0&0&0&0&1&0&0\\
1&1&3&0&0&0&0&3&0&3&4&0&0&0&0&4&2\\
0&0&0&0&0&0&0&1&1&0&0&3&4&3&0&0&4\\
0&0&0&0&3&0&0&0&4&0&0&0&2&0&0&0&0
\end{array}\right].\vspace{0cm}
\end{equation*}
\normalsize

Table~II shows the number of cycles-$6$ for SC-Code{blue}{s}~2 and 3 and MD-SC-Codes~1-3. MD-SC-Code~2 has nearly $90\%$ fewer cycles-$6$ compared to SC-Code~2, and MD-SC-Code~3 has nearly $99\%$ fewer cycles-$6$ compared to SC-Code~3. Furthermore, by increasing the number of constituent SC codes, although the overall code length increases, the number of cycles-$6$ decreases thanks to the higher amount of the MD coupling.
 
 \begin{table}
\centering
\caption{Number of cycles-$6$ for MD-SC codes with various values of $L_2$ and their 1D counterparts.\vspace{-0cm}}
\begin{tabular}{|l|l|l|l|l|}
\hline
code name&$L_2$&length&rate&cycles-$6$\\
\hline
MD-SC-Code~1 (SC-Code~1)&$1$&$2{,}890$&$0.74$&$29{,}274$\\
\hline
SC-Code~2&$1$&$8{,}670$&$0.76$&$91{,}494$\\
\hline
MD-SC-Code~2&$3$&$8{,}670$&$0.74$&$9{,}078$\\
\hline
SC-Code~3&$1$&$14{,}450$&$0.76$&$153{,}714$\\
\hline
MD-SC-Code~3&$5$&$14{,}450$&$0.74$&$1{,}700$\\
\hline
\end{tabular}\vspace{0cm}
\end{table}

{Fig.~\ref{MDvs1Dg6} compares the BER performance for our MD-SC codes and their 1D-SC counterparts.} MD-SC-Code~2 shows about $4$ orders of magnitude performance improvement compared to SC-Code~2 at SNR$=4.10$~dB. This improvement is very pronounced for MD-SC-Code~3 compared to SC-Code~3 (about $6$ orders of magnitude at SNR$=3.85$~dB). {These results demonstrate that the freedom offered by MD-SC codes is thoroughly exploited by our efficient construction framework, resulting in a large improvement in the BER performance.} One interesting observation here is that although increasing the coupling length improves the BER perfromance for 1D-SC codes, the improvement becomes incremental for large values of $L$. Therefore, adding the MD coupling to achieve an even better error correction is a promising choice.

\begin{figure}
	\begin{tabular}{cc}
	\hspace{-0.23cm}\includegraphics[width=0.4\textwidth]{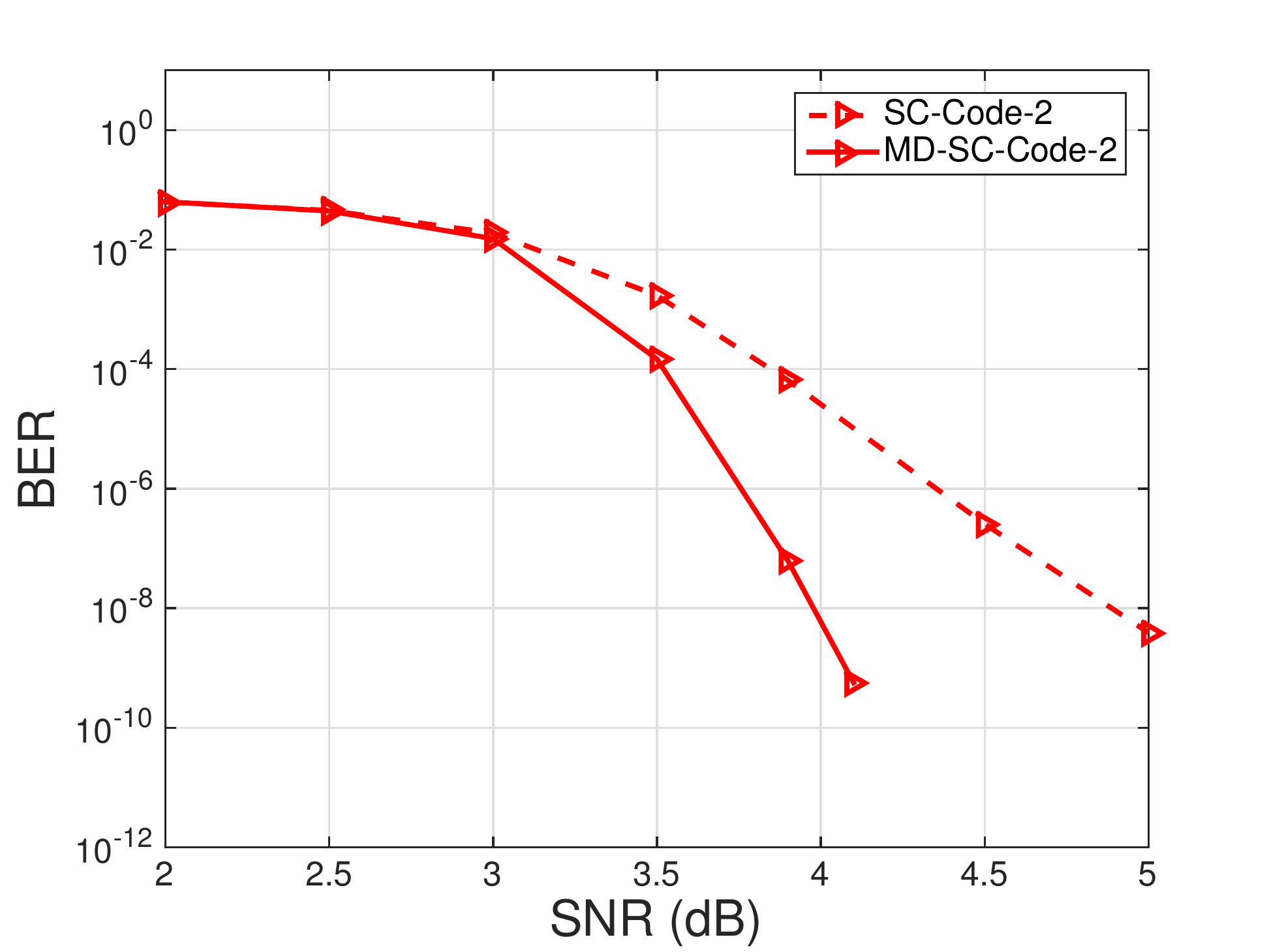}
	&
	\hspace{-0.6cm}\includegraphics[width=0.4\textwidth]{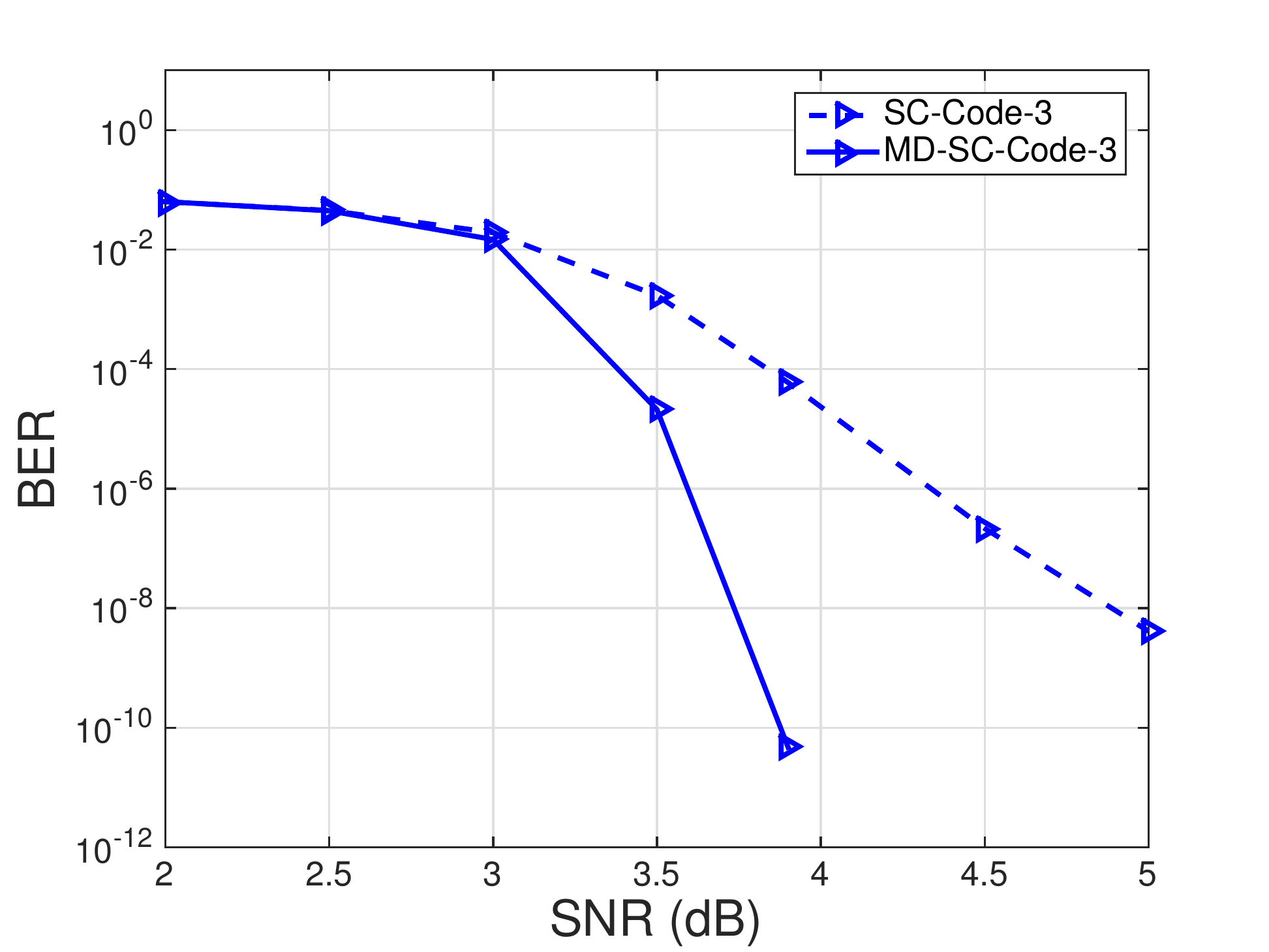}\vspace{-0.0cm}\\
		(a)&(b)\vspace{-0cm}
	\end{tabular}
	\centering
	\caption{The BER performance for MD-SC codes compared to their 1D counterparts: (a) $L_2=3$, (b) $L_2=5$.}
	\label{MDvs1Dg6}
\end{figure}

\subsection{Analysis for MD-SC Codes with Girth 8}

We first describe the code parameters of SC-Code~4 with girth $8$ that is used as constituent SC code in the rest of this subsection. SC-Code~4 has parameters $\kappa=19$, $z=23$, $\gamma=3$, $m=2$, $L=10$, rate $0.81$, and length $4{,}370$ bits, and it is constructed by the OO-CPO technique \cite{EsfahanizadehTCOM2018}. The partitioning and circulant powers of SC-Code~4 are given in Appendix. The cycles of interest here have length 8, i.e., $k=8$.

We consider MD-SC codes with $L_2=4$ constructed by Algorithm~2. 
Fig.~\ref{G8Sims}(a) shows the effect of increasing the MD coupling density, $\mathcal{T}$, on the population of cycles-$8$ for various MD coupling depths. We have two interesting observations here: First, increasing $\mathcal{T}$ does not decrease the population of active cycles-$8$ after $24$ (resp. $22$ and $21$) relocations for depth $2$ (resp., $3$ and $4$), implying that a larger depth does not necessarily result in an earlier termination. Second, for some relocations, although the population of active cycles-$8$ does not decrease, Algorithm~$2$ proceeds with relocations (for example, see relocations $18^\text{th}$ and $19^\text{th}$ in Fig.~\ref{G8Sims}(a)). This is because although these relocations do not reduce the population of the shortest cycles (cycles with length $8$ here), they reduce the population of cycles with length $2k=16$. 

\begin{figure}
    \centering
    \begin{tabular}{cc}
    \includegraphics[width=0.4\textwidth]{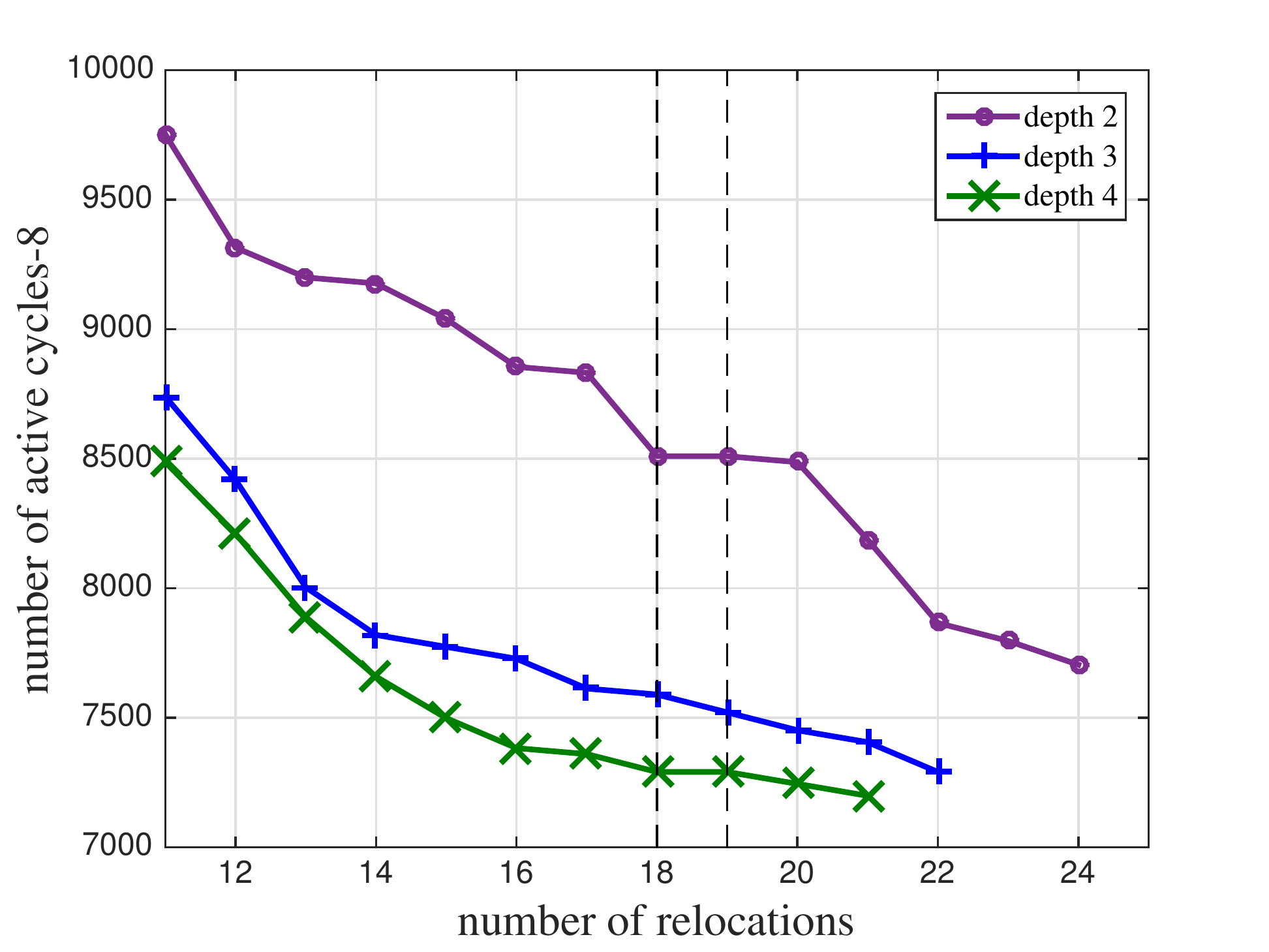}&
    \includegraphics[width=0.4\textwidth]{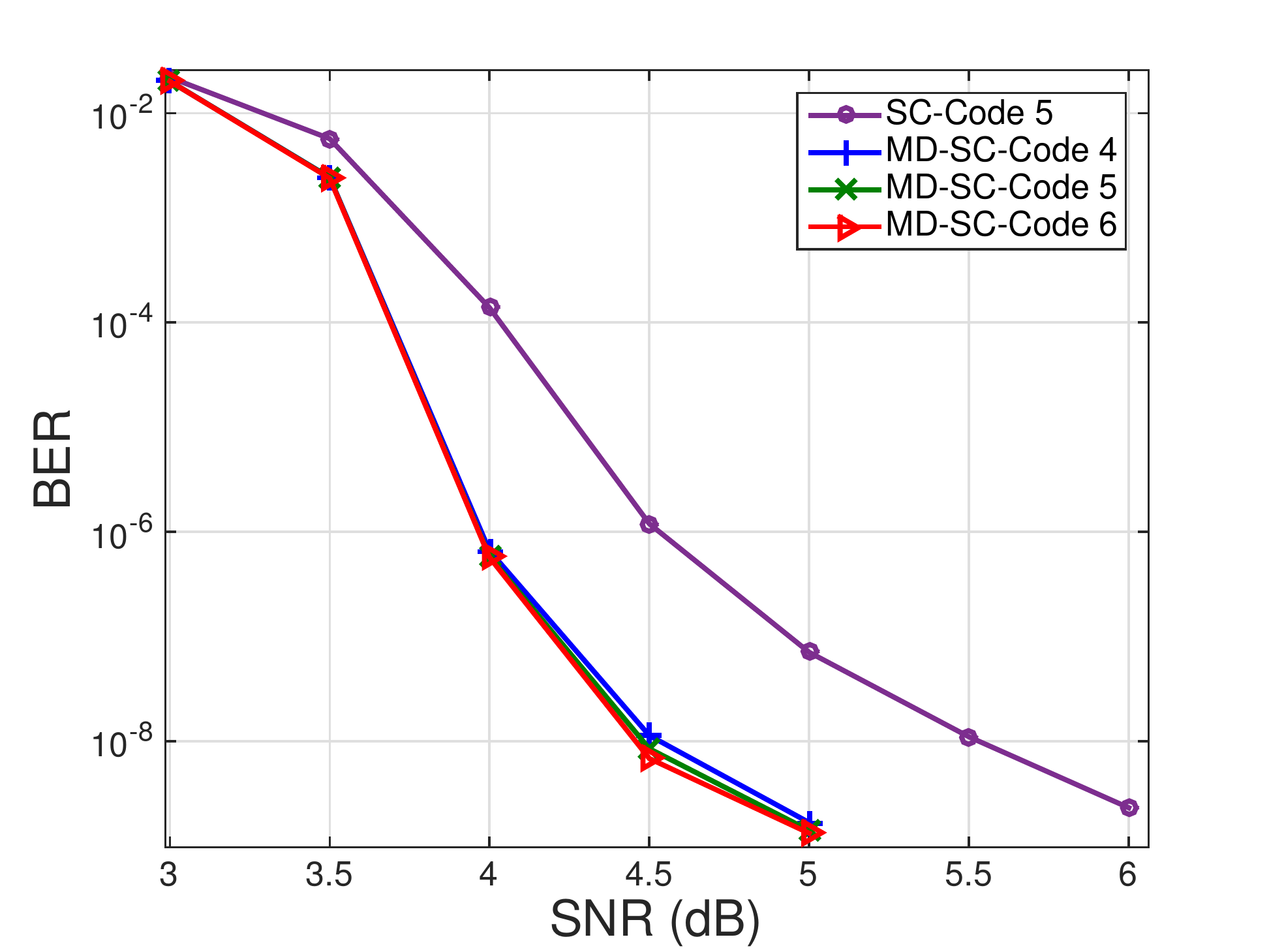}\vspace{-0.0cm}
    \\
     (a)&(b)
\end{tabular}
    \caption{MD-SC codes with SC-Code~4 as the constituent SC code and $L_2=4$: (a) The number of active cycles-$8$ for various densities and depths. (b) The BER performance for density $25\%$ and various depths along with the BER performance for the 1D-SC counterpart (SC-Code~5).}
    \label{G8Sims}
\end{figure}

Next, we study the BER performance of MD-SC codes with various depths and their 1D-SC counterpart. We first describe the codes: MD-SC-Codes~4-6 have $L_2=4$, $\mathcal{T}=19$, SC-Code~4 as their constituent SC codes, length $17{,}480$, and rate $0.81$. MD-SC-Code~4, resp. 5 and 6, have depth $2$, resp., $3$ and $4$. SC-Code~5 is an SC code similar to SC-Code~4 but with $L=40$ (four times the coupling length of SC-Code~4); thus it has comparable length and rate to MD-SC-Codes~4-6 (length $17{,}480$ and rate $0.83$)\footnote{The MD mapping matrices for MD-SC-Codes~4-6 can be found in Appendix.}.

According to Fig.~\ref{G8Sims}(b), MD-SC-Codes~4-6 show about $2$ orders of magnitude performance improvement compared to SC-Code~5 at SNR$=4.50$~dB. Table~III shows the number of cycles-$8$ for SC-Code~5 and MD-SC-Codes~4-6. MD-SC-Code~6 has nearly $82\%$ fewer cycles-$8$ compared to SC-Code~5 and nearly $15\%$ fewer cycles-$8$ compared to MD-SC-Code~4. As we see, the MD coupling considerably improves the performance of the SC codes; however, the improvement by increasing the MD coupling depth is small in this case, and thus, using a lower depth is sufficient to achieve a good error floor performance.\vspace{-0cm}

\begin{table}[!htbp]
\centering
\caption{Number of cycles-$8$ for MD-SC-Codes~4-6 and SC-Code~5.\vspace{-0cm}}
\begin{tabular}{|c|c|c|}
\hline
code name&number of active cycles-$8$&total number of cycles-$8$\\
\hline
SC-Code~5&--&$1{,}397{,}319$\\
\hline
MD-SC-Code~4&$8{,}510$&$292{,}560$\\
\hline
MD-SC-Code~5&$7{,}521$&$258{,}060$\\
\hline
MD-SC-Code~6&$7{,}291$&$249{,}320$\\
\hline
\end{tabular}\vspace{0cm}
\end{table}

\subsection{Comparison with Random Constructions}

Previous works on MD-SC codes, while promising, either consider random constructions or are limited to specific topologies. 
In this subsection, we compare our new MD-SC code construction with random constructions for connecting several SC codes together. Random constructions are inspired by \cite{TruhachevITA2012,OhashiISIT2013,SchmalenISTC2014,LiuCOMML2015}, where the purpose of random constructions is performing an ensemble asymptotic analysis over a family of the MD-SC codes. In order to perform a fair comparison, all MD-SC codes in this section have the same constituent SC code, i.e., SC-Code~6. SC-Code~6 has parameters $\kappa=17$, $z=17$, $\gamma=3$, $m=1$, $L=15$, rate $0.81$, and length $4{,}335$ bits, and it is constructed by the OO-CPO technique \cite{EsfahanizadehTCOM2018}. The partitioning and circulant powers of SC-Code~6 are given in Appendix. The cycles of interest here have length 6, i.e., $k=6$.


MD-SC-Codes~7-10 have $L_2=3$, $\mathcal{T}=9$, SC-Code~6 as their constituent SC codes, length $13{,}005$ bits, and rate $0.81$. MD-SC-Codes~7 and 8, have depths $2$ and $3$, respectively, and they are constructed by Algorithm~2 introduced in this paper\footnote{The MD mapping matrices for MD-SC-Codes~7 and 8 can be found in Appendix.}. MD-SC-Codes~9 and 10 are constructed by random relocations, and they both have depth $2$. For MD-SC-Code~9, the relocated circulants are chosen uniformly at random, and similar relocations are applied to all replicas of one constituent SC code. However, different constituent SC codes can have different relocations. MD-SC-Code~10 is constructed in a similar way to MD-SC-Code~9, but the same relocations are applied to all constituent SC codes. The later random construction has the benefit of avoiding the creation of  cycles-$4$ if the constituent SC codes do not have cycles-$4$.

\begin{table}[t]
\centering
\caption{Population of short cycles for MD-SC-Codes~7-10 (MD-SC codes constructed by different policies).\vspace{-0cm}}
\begin{tabular}{|l|c|c|c|}
\hline
code name&num. cycles-$4$&num. cycles-$6$&num. cycles-$8$\\
\hline
MD-SC-Code~7&$0$&$2{,}856$&$685{,}032$\\
\hline
MD-SC-Code~8&$0$&$0$&$643{,}110$\\
\hline
MD-SC-Code~9&$255$&$9{,}010$&$585{,}820$\\
\hline
MD-SC-Code~10&$0$&$8{,}211$&$606{,}543$\\
\hline
\end{tabular}\vspace{-0cm}
\end{table}

Table~IV shows the population of short cycles for MD-SC-Codes~7-10. As we see, MD-SC-Code~7 has $65\%$ fewer cycles-$6$ compared to MD-SC-Code~10, and they both have zero cycles-$4$. These two codes have they same structure, but the relocated circulants are chosen randomly for MD-SC-Code~7, while they are chosen to specifically reduce the number of cycles-$6$ for MD-SC-Code~10. MD-SC-Code~8, which is similar to MD-SC-Code~7 but with depth $3$, has zero cycles-$6$ and $6.1\%$ fewer cycles-$8$ compared to MD-SC-Code~7. MD-SC-Code~9 is similar to MD-SC-Code~10, but without the constraint of similar relocations for all constituent SC codes, thus it could not preserve the girth of the constituent SC codes and has cycles-$4$. Fig.~\ref{COMPandMD}(a) shows the BER performance comparison for MD-SC-Code~7 and MD-SC-Code~10. These two codes both have depth $2$ and have the MD structure described in (\ref{MD_structure}). At SNR= $6.0$~dB, MD-SC-Code~7 shows nearly $1.3$ orders of magnitude BER improvement compared to MD-SC-Code~10.\vspace{-0cm}

\begin{figure}
    \centering
    \begin{tabular}{cc}
    \includegraphics[width=0.4\textwidth]{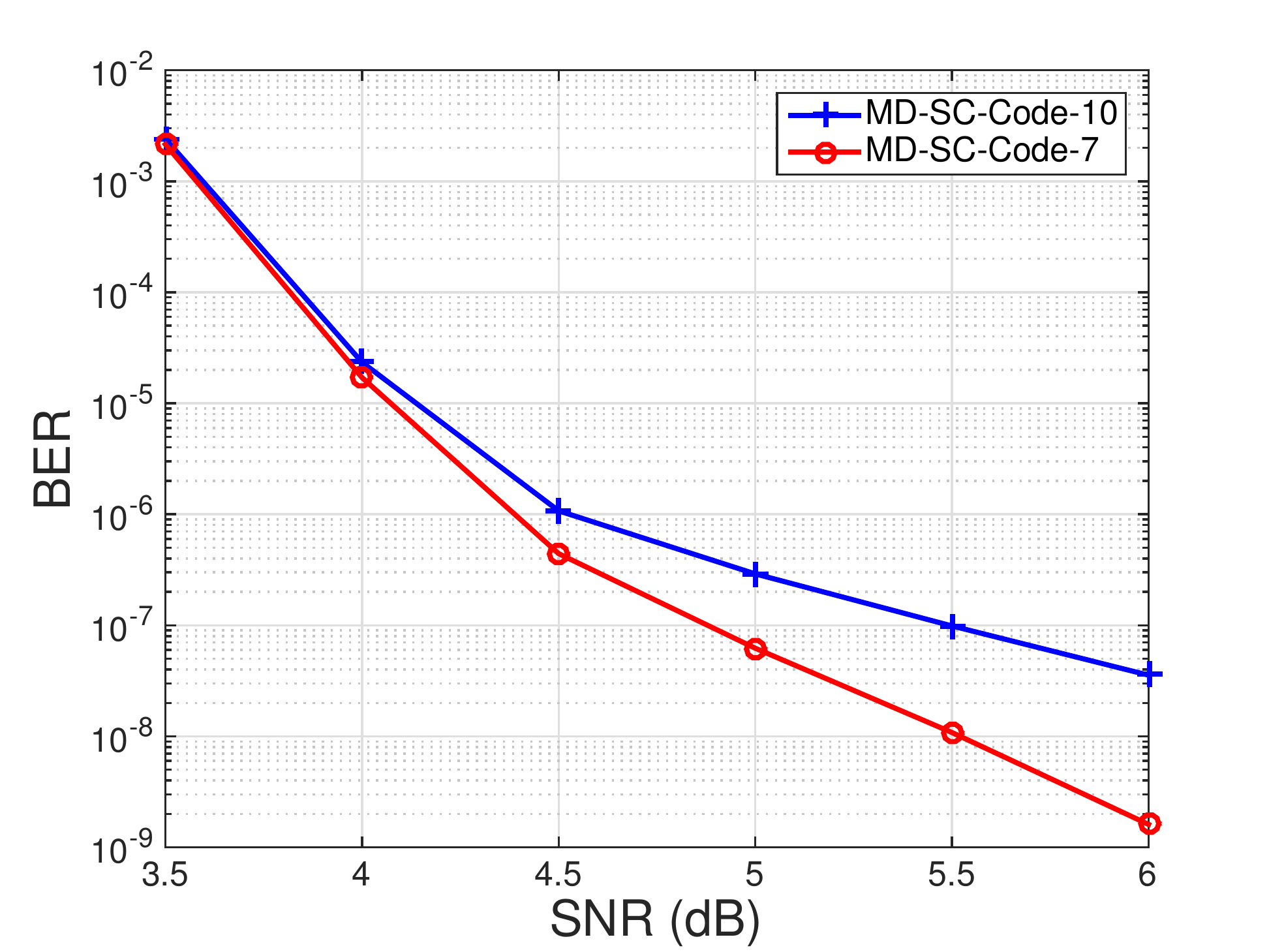}&
    \includegraphics[width=0.4\textwidth]{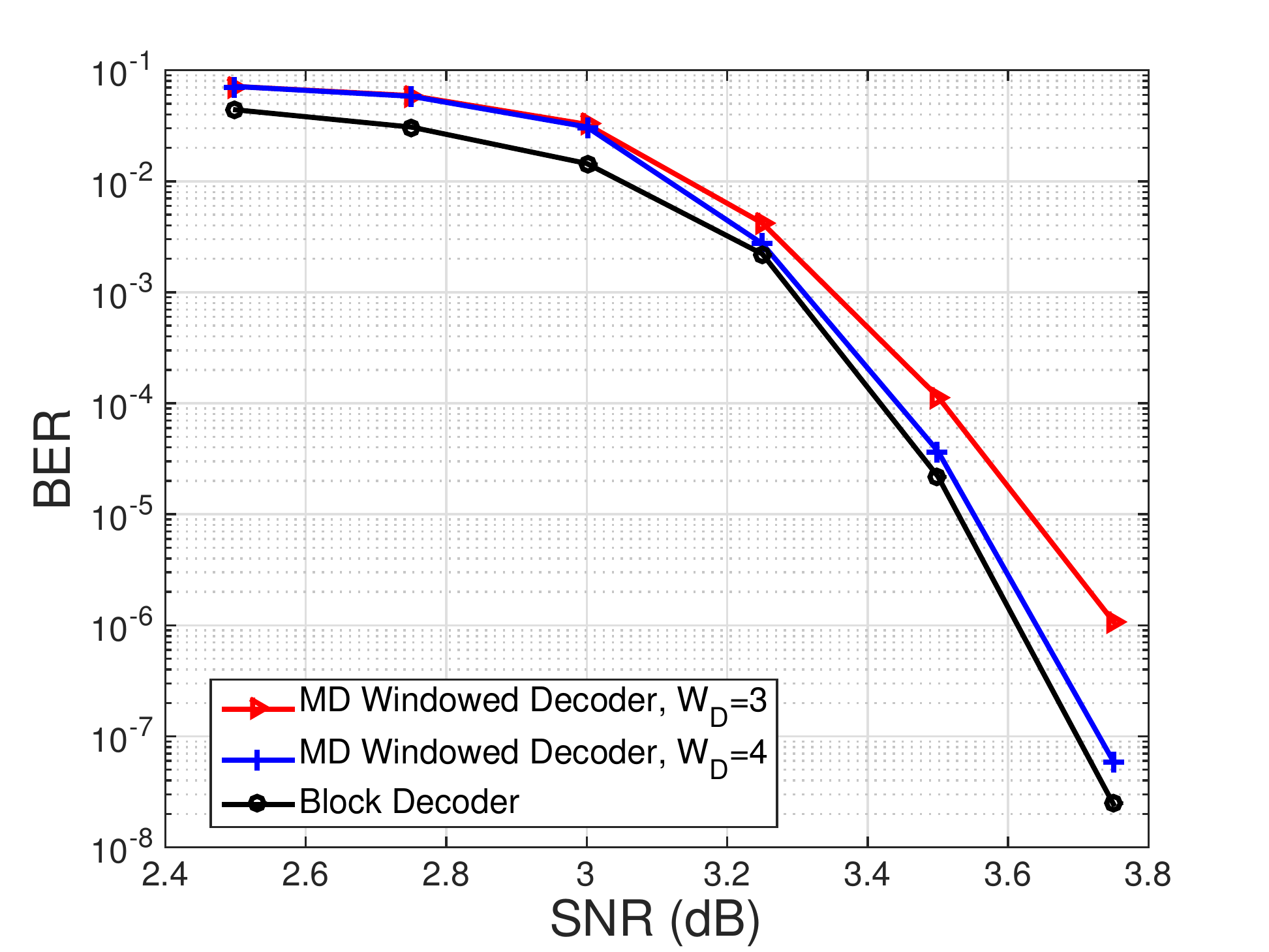}
    \\
     (a)&(b)\vspace{-0cm}
\end{tabular}
    \caption{(a) The BER performance for MD-SC codes with SC-Code~6 as the constituent SC code, $L_2=3$, density $18\%$, constructed based on a random policy and our new score-voting policy. (b) The BER performance comparison for MD windowed decoder, with MD window sizes $3$ and $4$, and block decoder for decoding an MD-SC code.\vspace{-0cm}}
    \label{COMPandMD}
\end{figure}

\subsection{Evaluation of MD Windowed Decoding}

In this subsection, we use SC-Code~1 as the constituent code and construct MD-SC-Code~11 with $L_2=5$, $d=2$, $\mathcal{T}=18$, length $14{,}450$ bits, and rate $0.74$. We evaluate the BER performance of MD-SC-Code~11 using the MD windowed decoder with MD window sizes $3$ and $4$. As a reference, we also show the BER performance using a block decoder. Both the MD windowed decoder and the block decoder use min-sum algorithm with $15$ iterations. The results are illustrated in Fig.~\ref{COMPandMD}(b). As expected, there is a slight degradation in the BER performance for windowed decoder compared to the block decoder. In addition, the degradation decreases as the MD window size increases, and it is already small for MD window size $4$.\vspace{-0cm}

\section{Appendix}

The partitioning matrix $\mathbf{PM}=[h_{i,j}]$ and circulant power matrix $\mathbf{CM}=[f_{i,j}]$, with dimensions $\gamma\times\kappa$, describe partitioning and circulant powers{, respectively}. A circulant with row group index $i$ and column group index $j$ in the block code $\mathbf{H}$ is assigned to the component matrix $\mathbf{H}_{h_{i,j}}$, and it has power $f_{i,j}$. For SC-Codes~1-3, these two matrices are given below:\vspace{-0cm}

\scriptsize
\begin{equation*}
\mathbf{PM}^1\hspace{-0.1cm}=\hspace{-0.1cm}\left[
\arraycolsep=3pt\def\arraystretch{1}
\begin{array}{ccccccccccccccccc}
0&1&0&1&0&1&0&1&0&1&0&1&0&1&0&1&1\\
1&0&1&0&1&0&1&0&1&0&1&0&1&0&1&0&0\\
0&0&0&0&0&0&0&0&1&1&1&1&1&1&1&1&1\\
1&1&1&1&1&1&1&1&0&0&0&0&0&0&0&0&0
\end{array}\right]\hspace{-0.1cm},
\mathbf{CM}^1\hspace{-0.1cm}=\hspace{-0.1cm}\left[
\arraycolsep=2pt\def\arraystretch{1}
\begin{array}{ccccccccccccccccc}
0 &10&2&8&2&0&5&7&15&0&0&0&0&10&0&0&0\\
11&15&2&14&10&3&6&7&8&9&4&11&12&8&14&10&16\\
11&2&4&12&8&11&12&9&15&4&13&5&6&1&11&13&15\\
11&3&6&9&2&16&8&4&7&10&13&16&2&5&8&6&14
\end{array}\right].
\end{equation*}
\normalsize
For SC-Codes~4-5, these two matrices are given below:
\scriptsize
\begin{equation*}
\mathbf{PM}^2\hspace{-0.1cm}=\hspace{-0.1cm}\left[
\arraycolsep=3pt\def\arraystretch{1}
\begin{array}{ccccccccccccccccccc}
0&1&1&0&1&2&0&2&2&0&1&1&0&1&2&0&2&2&2\\
1&0&0&1&0&0&1&0&0&2&2&2&2&2&1&2&1&1&1\\
2&2&2&2&2&1&2&1&1&1&0&0&1&0&0&1&0&0&0
\end{array}\right]\hspace{-0.1cm},
\end{equation*}
\begin{equation*}
\mathbf{CM}^2\hspace{-0.1cm}=\hspace{-0.1cm}\left[\hspace{-0.1cm}
\arraycolsep=2pt\def\arraystretch{1}
\begin{array}{ccccccccccccccccccc}
21&0 &16&3&19&1&0&0&21&5&0&0&1&0&9&0&16&1&0\\
 0 &11&7&3&4&5&6&7&8&9&10&11&12&13&14&15&16&17&18\\
 0&17&0 &6&8&10&12&14&16&18&20&22&1&3&5&19&9&11&13
 \end{array}\right]\hspace{-0.1cm}.
\end{equation*}
\normalsize

For SC-Code~6, these two matrices are given below:
\footnotesize
\begin{equation*}
\mathbf{PM}^3\hspace{-0.1cm}=\hspace{-0.1cm}\left[
\arraycolsep=3pt\def\arraystretch{1}
\begin{array}{ccccccccccccccccc}
1&0&1&0&1&0&1&0&1&0&1&0&1&0&1&0&1\\
0&1&0&1&0&1&0&1&0&1&0&1&0&1&0&1&0\\
1&0&0&1&0&1&1&0&0&1&1&0&1&0&1&1&0
\end{array}\right]\hspace{-0.1cm},\hspace{0.1cm}
\hspace{-0.1cm}\mathbf{CM}^3\hspace{-0.1cm}=\hspace{-0.1cm}\left[\hspace{-0.1cm}
\arraycolsep=2pt\def\arraystretch{1}
\begin{array}{ccccccccccccccccc}
0&0&2&9&0&7&4&16&2&4&2&9&0&4&13&1&1\\	
13&1&2&6&4&5&6&7&8&9&10&13&12&0&14&8&16\\	
0&2&0&0&8&10&8&14&16&1&3&5&7&15&5&5&2
 \end{array}\right]\hspace{-0.1cm}.
\end{equation*}
\normalsize

For MD-SC-Codes~4-8, the MD mapping matrices are given below:
\scriptsize
\begin{equation*}
\mathbf{M}^4\hspace{-0.0cm}=\hspace{-0.0cm}\left[
\arraycolsep=3pt\def\arraystretch{1}
\begin{array}{ccccccccccccccccccc}
0&1&1&0&1&0&0&0&0&0&1&0&0&1&1&0&0&0&0\\	
1&1&0&1&0&0&0&1&0&1&0&0&1&0&1&0&0&0&1\\
0&0&0&0&0&1&0&0&0&1&0&0&1&0&0&0&1&1&0
 \end{array}\right]\hspace{-0.1cm},
\end{equation*}
\begin{equation*}
\mathbf{M}^5\hspace{-0.0cm}=\hspace{-0.0cm}\left[
\arraycolsep=3pt\def\arraystretch{1}
\begin{array}{ccccccccccccccccccc}
0&2&1&0&1&2&0&0&0&0&1&0	&0&1&0&0&0&2&0	\\
2&0&0&1&0&0&0&1&0&0&0&0&0&0&1&0&0&0&2\\
0&1&0&0&0&1&0&0&0&1&0&0&1&1&0&1&2&0&0
 \end{array}\right]\hspace{-0.1cm},\hspace{0.1cm}
\mathbf{M}^6\hspace{-0.0cm}=\hspace{-0.0cm}\left[
\arraycolsep=3pt\def\arraystretch{1}
\begin{array}{ccccccccccccccccccc}
0&2&0&0&0&0&0&0&0&0&1&0	&0&1&1&0&0&3&0	\\
1&0&0&1&1&0&0&0&0&0&0&3	&2&0&3&0&0&0&2	\\
0&1&0&3&0&1&0&0&0&1&3&0	&0&0&0&0&2&3&0
 \end{array}\right]\hspace{-0.1cm}.
\end{equation*}
\normalsize

\scriptsize
\begin{equation*}
\mathbf{M}^7\hspace{-0.0cm}=\hspace{-0.0cm}\left[
\arraycolsep=3pt\def\arraystretch{1}
\begin{array}{ccccccccccccccccc}
1&0&1&1&0&1&0&0&1&0&0&1&0&0&1&0&0\\
0&0&0&0&0&0&0&0&0&0&0&0&0&0&0&0&0\\
0&0&0&0&0&0&0&0&0&0&0&1&0&0&0&1&0
 \end{array}\right]\hspace{-0.1cm},\hspace{0.1cm}
\mathbf{M}^8=\hspace{-0.0cm}\left[
\arraycolsep=3pt\def\arraystretch{1}
\begin{array}{ccccccccccccccccc}
2&0&1&1&0&1&0&0&0&0&2&2&0&0&1&0&0\\	
1&0&0&0&0&0&0&0&0&0&0&0&0&0&0&0&0\\
0&0&0&0&0&0&0&0&0&0&0&0&0&0&0&2&0
 \end{array}\right]\hspace{-0.1cm}.
\end{equation*}
\normalsize

\section{Conclusion}
We expanded the repertoire of SC codes by establishing a framework for MD-SC code construction with an arbitrary number of constituent SC codes and an arbitrary multi-dimensional coupling depth. For MD coupling, we rewire connections (relocate circulants) that are most problematic within each SC code. Our framework encompasses a systematic way to sequentially identify and relocate problematic circulants, thus utilizing them to connect the constituent SC codes. Our MD-SC codes show a notable reduction in the population of the small cycles and a significant improvement in the BER performance compared to the 1D setting. We also presented a windowed decoder for the MD-SC codes that exploits the locality of the constituent SC codes to attain a low decoding latency.
Two promising research directions are to investigate MD-SC codes on non-uniform channels, such as multilevel Flash and multi-dimensional magnetic recording channels, in addition to improve the presented windowed decoder by incorporating the MD coupling depth to further reduce the decoding latency and complexity.
{Furthermore, the presented methodology for constructing MD-SC codes can be extended to use CB underlying block codes that have circulants of weight $0$, $1$, or larger than $1$, and this is an interesting research direction.}\vspace{-0cm}

\section*{Acknowledgment\vspace{-0cm}}
{Research supported in part by UCLA Dissertation Year
Fellowship, a grant from ASRC-IDEMA, and NSF CCF-BSF:CIF 1718389. We would like to thank the reviewers whose suggestions helped improve and clarify this manuscript.}
\bibliographystyle{IEEEtran}
\bibliography{IEEEabrv,references}

\begin{thebibliography}{10}
\providecommand{\url}[1]{#1}
\csname url@samestyle\endcsname
\providecommand{\newblock}{\relax}
\providecommand{\bibinfo}[2]{#2}
\providecommand{\BIBentrySTDinterwordspacing}{\spaceskip=0pt\relax}
\providecommand{\BIBentryALTinterwordstretchfactor}{4}
\providecommand{\BIBentryALTinterwordspacing}{\spaceskip=\fontdimen2\font plus
\BIBentryALTinterwordstretchfactor\fontdimen3\font minus
  \fontdimen4\font\relax}
\providecommand{\BIBforeignlanguage}[2]{{%
\expandafter\ifx\csname l@#1\endcsname\relax
\typeout{** WARNING: IEEEtran.bst: No hyphenation pattern has been}%
\typeout{** loaded for the language `#1'. Using the pattern for}%
\typeout{** the default language instead.}%
\else
\language=\csname l@#1\endcsname
\fi
#2}}
\providecommand{\BIBdecl}{\relax}
\BIBdecl

\bibitem{EsfahanizadehAllerton2018}
H.~Esfahanizadeh, A.~Hareedy, and L.~Dolecek, ``Multi-dimensional
  spatially-coupled code design through informed relocation of circulants,'' in
  \emph{Proc. Annual Allerton Conf. Commun., Control and Comp.}, Oct. 2018, pp.
  695--701.

\bibitem{EsfahanizadehNVMW2019}
H.~{Esfahanizadeh}, A.~{Hareedy}, and L.~{Dolecek}, ``Multi-dimensional
  spatially-coupled code design with improved cycle properties,'' in
  \emph{Proc. Annual Non-Volatile Memories Workshop}, Mar. 2019.

\bibitem{FelstromIT1999}
A.~J. Felstrom and K.~S. Zigangirov, ``Time-varying periodic convolutional
  codes with low-density parity-check matrix,'' \emph{IEEE Trans. Inf. Theory},
  vol.~45, no.~6, pp. 2181--2191, Sep. 1999.

\bibitem{TannerIT2004}
R.~M. Tanner, D.~Sridhara, A.~Sridharan, T.~E. Fuja, and D.~J. Costello,
  ``{LDPC} block and convolutional codes based on circulant matrices,''
  \emph{IEEE Trans. Inf. Theory}, vol.~50, no.~12, pp. 2966--2984, Dec. 2004.

\bibitem{LentmaierIT2010}
M.~Lentmaier, A.~Sridharan, D.~J. Costello, and K.~S. Zigangirov, ``Iterative
  decoding threshold analysis for {LDPC} convolutional codes,'' \emph{IEEE
  Trans. Inf. Theory}, vol.~56, no.~10, pp. 5274--5289, Oct. 2010.

\bibitem{KudekarIT2013}
S.~{Kudekar}, T.~{Richardson}, and R.~L. {Urbanke}, ``Spatially coupled
  ensembles universally achieve capacity under belief propagation,'' \emph{IEEE
  Trans. Info. Theory}, vol.~59, no.~12, pp. 7761--7813, Dec. 2013.

\bibitem{EsfahanizadehTCOM2018}
H.~{Esfahanizadeh}, A.~{Hareedy}, and L.~{Dolecek}, ``{Finite-Length
  Construction of High Performance Spatially-Coupled Codes via Optimized
  Partitioning and Lifting},'' \emph{IEEE Trans. Commun.}, vol.~67, no.~1, pp.
  3--16, Jan. 2019.

\bibitem{MitchellISIT2017}
D.~G.~M. {Mitchell} and E.~{Rosnes}, ``Edge spreading design of high rate
  array-based {SC-LDPC} codes,'' in \emph{Proc. IEEE Int. Symp. Inf. Theory},
  Jun. 2017, pp. 2940--2944.

\bibitem{BeemerISITA2016}
A.~{Beemer} and C.~A. {Kelley}, ``Avoiding trapping sets in {SC-LDPC} codes
  under windowed decoding,'' in \emph{Proc. IEEE Int. Symp. Inf. Theory and Its
  Applications}, Oct. 2016, pp. 206--210.

\bibitem{TruhachevITA2012}
D.~Truhachev, D.~G.~M. Mitchell, M.~Lentmaier, and D.~J. Costello, ``New codes
  on graphs constructed by connecting spatially coupled chains,'' in
  \emph{Proc. Inf. Theory and App. Workshop}, Feb. 2012, pp. 392--397.

\bibitem{OhashiISIT2013}
R.~Ohashi, K.~Kasai, and K.~Takeuchi, ``Multi-dimensional spatially-coupled
  codes,'' in \emph{Proc. IEEE Int. Symp. Inf. Theory}, Jul. 2013, pp.
  2448--2452.

\bibitem{TruhachevICC2012}
D.~{Truhachev}, D.~G.~M. {Mitchell}, M.~{Lentmaier}, and D.~J. {Costello},
  ``Connecting spatially coupled {LDPC} code chains,'' in \emph{Proc. IEEE Int.
  Conf. Commun.}, Jun. 2012, pp. 2176--2180.

\bibitem{OlmosITW2013}
P.~M. Olmos, D.~G.~M. Mitchell, D.~Truhachev, and D.~J. Costello, ``A finite
  length performance analysis of {LDPC} codes constructed by connecting
  spatially coupled chains,'' in \emph{Proc. IEEE Inf. Theory Workshop}, Sep.
  2013, pp. 1--5.

\bibitem{SchmalenISTC2014}
L.~Schmalen and K.~Mahdaviani, ``Laterally connected spatially coupled code
  chains for transmission over unstable parallel channels,'' in \emph{Proc.
  Int. Symp. Turbo Codes Iterative Inf. Processing}, Aug. 2014, pp. 77--81.

\bibitem{LiuCOMML2015}
Y.~Liu, Y.~Li, and Y.~Chi, ``Spatially coupled {LDPC} codes constructed by
  parallelly connecting multiple chains,'' \emph{IEEE Commun. Letters},
  vol.~19, no.~9, pp. 1472--1475, Sep. 2015.

\bibitem{TanakaWCNC2017}
R.~Tanaka and K.~Ishibashi, ``Robust coded cooperation based on
  multi-dimensional spatially-coupled repeat-accumulate codes,'' in \emph{Proc.
  IEEE Wireless Commun. and Networking Conf.}, Mar. 2017, pp. 1--6.

\bibitem{AliWCL2018}
I.~{Ali}, H.~{Lee}, A.~{Hussain}, and S.~{Kim}, ``Protograph-based folded
  spatially coupled {LDPC} codes for burst erasure channels,'' \emph{IEEE
  Wireless Commun. Letters}, pp. 1--1, 2018.

\bibitem{WangIT2013}
Y.~{Wang}, S.~C. {Draper}, and J.~S. {Yedidia}, ``Hierarchical and high-girth
  {QC} {LDPC} codes,'' \emph{IEEE Trans. Info. Theory}, vol.~59, no.~7, pp.
  4553--4583, Jul. 2013.

\bibitem{HareedyGC2018}
A.~{Hareedy}, H.~{Esfahanizadeh}, A.~{Tan}, and L.~{Dolecek},
  ``Spatially-coupled code design for partial-response channels: Optimal
  object-minimization approach,'' in \emph{Proc. IEEE Global Commun. Conf.},
  Dec. 2018, pp. 1--7.

\bibitem{Richardson2003}
T.~Richardson, ``Error floors of {LDPC} codes,'' in \emph{Proc. Annual Allerton
  Conf. Commun., Control and Comp.}, Oct. 2003, pp. 1426--1435.

\bibitem{DolecekIT2010}
L.~Dolecek, Z.~Zhang, V.~Anantharam, M.~J. Wainwright, and B.~Nikolic,
  ``Analysis of absorbing sets and fully absorbing sets of array-based {LDPC}
  codes,'' \emph{IEEE Trans. Inf. Theory}, vol.~56, no.~1, pp. 181--201, Jan.
  2010.

\bibitem{BanihashemiIT2014}
M.~{Karimi} and A.~H. {Banihashemi}, ``On characterization of elementary
  trapping sets of variable-regular {LDPC} codes,'' \emph{IEEE Trans. Info.
  Theory}, vol.~60, no.~9, pp. 5188--5203, Sep. 2014.

\bibitem{BanihashemiIT2016}
Y.~{Hashemi} and A.~H. {Banihashemi}, ``New characterization and efficient
  exhaustive search algorithm for leafless elementary trapping sets of
  variable-regular {LDPC} codes,'' \emph{IEEE Trans. Info. Theory}, vol.~62,
  no.~12, pp. 6713--6736, Dec. 2016.

\bibitem{SmarandacheIT2012}
R.~{Smarandache} and P.~O. {Vontobel}, ``Quasi-cyclic {LDPC} codes: Influence
  of proto- and tanner-graph structure on minimum hamming distance upper
  bounds,'' \emph{IEEE Trans. Info. Theory}, vol.~58, no.~2, pp. 585--607, Feb.
  2012.

\bibitem{BattaglioniTCOM2018}
M.~{Battaglioni}, M.~{Baldi}, and G.~{Cancellieri}, ``Connections between
  low-weight codewords and cycles in spatially coupled {LDPC} convolutional
  codes,'' \emph{IEEE Trans. Commun.}, vol.~66, no.~8, pp. 3268--3280, Aug.
  2018.

\bibitem{MacKayIT1999}
D.~J.~C. MacKay, ``Good error-correcting codes based on very sparse matrices,''
  \emph{IEEE Trans. Inf. Theory}, vol.~45, no.~2, pp. 399--431, Mar. 1999.

\bibitem{EsfahanizadehTMAG2017}
H.~Esfahanizadeh, A.~Hareedy, and L.~Dolecek, ``Spatially coupled codes
  optimized for magnetic recording applications,'' \emph{IEEE Trans.
  Magnetics}, vol.~53, no.~2, pp. 1--11, Feb. 2017.

\bibitem{Hareedy2017ITW}
A.~Hareedy, H.~Esfahanizadeh, and L.~Dolecek, ``High performance non-binary
  spatially-coupled codes for flash memories,'' in \emph{Proc. IEEE Inf. Theory
  Workshop}, Nov. 2017, pp. 229--233.

\bibitem{FossorierIT2004}
M.~P.~C. Fossorier, ``Quasi-cyclic low-density parity-check codes from
  circulant permutation matrices,'' \emph{IEEE Trans. Inf. Theory}, vol.~50,
  no.~8, pp. 1788--1793, Aug. 2004.

\bibitem{BattaglioniArxiv2019}
\BIBentryALTinterwordspacing
M.~Battaglioni, F.~Chiaraluce, M.~Baldi, and D.~Mitchell, ``Efficient search
  and elimination of harmful objects in optimized {QC} {SC-LDPC} codes,''
  \emph{CoRR}, vol. abs/1904.07158, 2019. [Online]. Available:
  \url{http://arxiv.org/abs/1904.07158}
\BIBentrySTDinterwordspacing

\bibitem{IyengarIT2012}
A.~R. Iyengar, M.~Papaleo, P.~H. Siegel, J.~K. Wolf, A.~Vanelli-Coralli, and
  G.~E. Corazza, ``Windowed decoding of protograph-based {LDPC} convolutional
  codes over erasure channels,'' \emph{IEEE Trans. Info. Theory}, vol.~58,
  no.~4, pp. 2303--2320, 2012.

\bibitem{IyengarIT2013}
A.~R. Iyengar, P.~H. Siegel, R.~L. Urbanke, and J.~K. Wolf, ``Windowed decoding
  of spatially coupled codes,'' \emph{IEEE Trans. Inf. Theory}, vol.~59, no.~4,
  pp. 2277--2292, Apr. 2013.

\bibitem{KudekarIT2011}
S.~Kudekar, T.~J. Richardson, and R.~L. Urbanke, ``Threshold saturation via
  spatial coupling: Why convolutional {LDPC} ensembles perform so well over the
  {BEC},'' \emph{IEEE Trans. Inf. Theory}, vol.~57, no.~2, pp. 803--834, Feb.
  2011.

\end{thebibliography}

\end{document}